\colorlet{lred}{red!10}
\colorlet{dgreen}{green!80!black!80}
\def\@gifnextchar#1#2#3{\let\@tempe#1\def\@tempa{#2}\def\@tempb{#3}%
  \futurelet\@tempc\@gifnch}
\def\@gifnch{\ifx\@tempc\@sptoken\let\@tempd\@tempb%
  \else\ifx\@tempc\@tempe\let\@tempd\@tempa\else\let\@tempd\@tempb\fi\fi\@tempd}
\def\SK@set#1{\left\{#1\right\}}
\def\SK@@set#1#2{\{#1\,:\,
    \begin{array}{@{}l@{}}#2\end{array}
\}}
\def\SK@mset#1{\left\{\!\!\left\{#1\right\}\!\!\right\}}
\def\SK@@mset#1#2{\{\!\!\{#1\,:\,
    \begin{array}{@{}l@{}}#2\end{array}
\}\!\!\}}
\def\BIG@set#1{\Big\{#1\Big\}}
\def\BIG@@set#1#2{\Big\{#1\:\Big|\:
    \begin{array}{@{}l@{}}#2\end{array}
\Big\}}
\newcommand{\Set}[1]{\@gifnextchar\bgroup{\SK@@set{#1}}{\SK@set{#1}}}
\newcommand{\Mset}[1]{\@gifnextchar\bgroup{\SK@@mset{#1}}{\SK@mset{#1}}}
\newcommand{\Bigset}[1]{\@gifnextchar\bgroup{\BIG@@set{#1}}{\BIG@set{#1}}}
\theoremstyle{plain}
\newtheorem{theorem}{Theorem}[section]
\newtheorem{lemma}[theorem]{Lemma}
\newtheorem{corollary}[theorem]{Corollary}
\theoremstyle{definition} 
\newtheorem{remark}[theorem]{Remark}
\newtheorem{definition}[theorem]{Definition}
\newcommand{\refeq}[1]{(\ref{eq:#1})}
\newcommand{\Zset}{\mathbb{Z}}
\newcommand{\Rset}{\mathbb{R}}
\newcommand{\one}{\mathbbm{1}}
\newcommand{\of}[1]{\left( #1 \right)}
\newcommand{\function}[2]{:#1 \rightarrow #2}
\newcommand{\WL}[1]{\ensuremath{#1\text{-}\mathrm{WL}}\xspace}
\newcommand{\kWL}{\WL k}
\newcommand{\eqkwl}{\equiv_{\kWL}}
\newcommand{\eqkkwl}[1]{\equiv_{#1\text{-}\mathrm{WL}}}
\newcommand{\alg}[4]{\mathrm{WL}_{#1}^{#2}(#3,#4)}
\newcommand{\algstabbb}[1]{\mathrm{WL}_{#1}}
\newcommand{\algkstab}[2]{\mathrm{WL}_{k}(#1,#2)}
\newcommand{\algkstabb}[1]{\mathrm{WL}_{k}(#1)}
\newcommand{\algkstabbb}{\mathrm{WL}_{k}}
\newcommand{\barx}{{\bar x}}
\newcommand{\Sub}[2]{\mathrm{Sub}(#1,#2)}
\newcommand{\eSubb}[3]{\mathrm{Sub}(#1,#2,#3)}
\newcommand{\esub}[4]{\mathrm{sub}(#1,#2,#3,#4)}
\newcommand{\tw}[1]{\mathit{tw}(#1)}
\newcommand{\htw}[1]{\mathit{htw}(#1)}
\newcommand{\Pa}{{\mathcal P}}
\newcommand{\calS}{\mathcal{S}}
\newcommand{\ig}[1]{I(#1)}
\DeclareMathOperator{\cay}{Cay}
\newcommand{\bZ}{\mathbb{Z}}
\title{On the Weisfeiler-Leman Dimension\\ of Fractional Packing}
\author{V.~Arvind\thanks{The Institute of Mathematical Sciences (HBNI), Chennai, India.}, 
Frank Fuhlbrück\thanks{Institut für Informatik, Humboldt-Universität zu Berlin, Germany.}, 
Johannes Köbler${}^\dagger$, Oleg Verbitsky${}^\dagger$\,\thanks{Supported by DFG grant KO 1053/8--1. 
On leave from the IAPMM, Lviv, Ukraine.}}
\date{}
\begin{document}

\maketitle

\begin{abstract}
The $k$-dimensional Weisfeiler-Leman procedure (\kWL), which colors
$k$-tuples of vertices in rounds based on the neighborhood structure
in the graph, has proven to be immensely fruitful in the algorithmic
study of Graph Isomorphism. More generally, it is of fundamental
importance in understanding and exploiting symmetries in graphs in
various settings. Two graphs are $\kWL$-equivalent if the
$k$-dimensional Weisfeiler-Leman procedure produces the same final
coloring on both graphs. \WL1-equivalence is known as fractional
isomorphism of graphs, and the $\kWL$-equivalence relation becomes
finer as $k$ increases.

We investigate to what extent standard graph parameters are preserved
by $\kWL$-equivalence, focusing on fractional graph packing numbers.
The integral packing numbers are typically NP-hard to compute, and
we discuss applicability of $\kWL$-invariance for estimating
the integrality gap of the LP relaxation provided by their fractional counterparts.
\end{abstract}

\section{Introduction}\label{s:intro}

The $1$-dimensional version of the Weisfeiler-Leman procedure is the
classical \emph{color refinement} applied to an input graph $G$. Each
vertex of $G$ is initially colored by its degree. The procedure
refines the color of each vertex $x\in V(G)$ in rounds, using the
multiset of vertex colors in the neighborhood of $x$. In the
$2$-dimensional version~\cite{WLe68}, all vertex pairs $(x,y)\in
V(G)\times V(G)$ are classified by a similar procedure of coloring
them in rounds. The extension of this procedure to a classification of
all $k$-tuples of $G$ is due to Babai (see historical overview in
\cite{Babai16,CaiFI92}) and is known as the \emph{$k$-dimensional
  Weisfeiler-Leman procedure}, abbreviated as~\kWL.
Graphs $G$ and $H$ are said to be \emph{\kWL-equivalent} (denoted
$G\eqkwl H$) if they are indistinguishable by \kWL.

\paragraph{\textbf{The WL invariance of graph parameters.}}

Let $\mathcal{G}$ denote the set of all graphs. A \emph{graph parameter}
is a function $\pi$ defined on $\mathcal{G}$ such that $\pi(G)=\pi(H)$
whenever $G$ and $H$ are isomorphic.

We say that $\pi$ is \emph{\kWL-invariant} if the equality
$\pi(G)=\pi(H)$ is implied even by the weaker condition $G\eqkwl H$.

\begin{definition}\label{def-dim}
  The \emph{Weisfeiler-Leman (WL) dimension} of a graph parameter
  $\pi$ is the least positive integer $k$, if it exists, such that for
  any pairs of graphs $G$ and $H$ that are $\kWL$-indistinguishable we
  have $\pi(G)=\pi(H)$. If no such $k$ exists, we say that the WL
  dimension of $\pi$ is \emph{unbounded}.
\end{definition}

Knowing that a parameter $\pi$ has unbounded WL dimension is important
from a descriptive complexity perspective, because it implies that
$\pi$ cannot be computed by any algorithm expressible in fixed-point
logic with counting (FPC), which is a robust framework for study of
\emph{encoding-invariant (\emph{or} ``choiceless'') computations}; see
the survey~\cite{Dawar15}.

The focus of our paper is on graph parameters with \emph{bounded} WL
dimension. A subset $\mathcal{P}\subseteq \mathcal{G}$ is a
\emph{graph property} if the indicator function $\pi$ of $\mathcal{P}$
is a graph parameter. It is well-known in finite model theory \cite{CaiFI92} that
$\pi$ is \kWL-invariant if and only if $\mathcal{P}$ is definable in
the infinitary $(k+1)$-variable counting logic
$C^{k+1}_{\infty\omega}$. While minimizing the number of
variables is a recurring theme in descriptive complexity; see,
e.g.~\cite{ImmermanK89,Fortin19}, our interest in the study of
\kWL-invariance has an additional motivation: If we know that a graph
parameter $\pi$ is \kWL-invariant, this gives us information not only
about $\pi$ but also \emph{about \kWL}.

Indeed, \kWL-invariance admits the following interpretation.  We say
that a (not necessarily numerical) graph invariant $\pi_1$
\emph{subsumes} a graph invariant $\pi_2$ if $\pi_1(G)\ne\pi_1(H)$
whenever $\pi_2(G)\ne\pi_2(H)$. That is to say, whenever $\pi_2$
distinguishes between graphs $G$ and $H$, $\pi_1$ also does. Let
$\algkstabb G$ denote the graph invariant computed by $\kWL$ on input
$G$.  As easily seen, a parameter $\pi$ is \kWL-invariant if and only
if $\pi$ is subsumed by $\algkstabbb$.  Which graph parameters are
subsumed by $\algkstabbb$ is of interest even for dimensions $k=1$ and
$k=2$, in view of the importance of \WL1 (color refinement) and \WL2
(the original Weisfeiler-Leman algorithm) in isomorphism testing
\cite{Babai16,BabaiES80} and, more recently, also in other application
areas \cite{KriegeJM19,Morris+19}.  It is known, for example, that the
largest eigenvalue of the adjacency matrix has WL dimension 1 (see
\cite{ScheinermanU97}), and the whole spectrum of a graph has WL
dimension 2 (see \cite{DawarSZ17,Fuerer10}). Relatedly, Kiefer and
Neuen \cite{KieferN19} recently proved that $\algstabbb 2$ subsumes,
in a certain strong sense, the decomposition of a graph into
3-connected components.

\paragraph{\textbf{Fractional graph parameters.}}

In this paper, we mainly consider \emph{fractional} graph parameters.
Algorithmically, a well-known approach to tackling intractable
optimization problems is to consider an appropriate linear programming
(LP) relaxation. Many standard integer-valued graph parameters have
fractional real-valued analogues, obtained by LP-relaxation of the
corresponding 0-1 linear program; see, e.g., the monograph
\cite{ScheinermanU97}. The fractional counterpart of a graph parameter
$\pi$ is denoted by $\pi_f$. While $\pi$ is often be hard to compute,
the fractional parameter $\pi_f$ sometimes provides a good
polynomial-time computable approximation of~$\pi$.

The WL dimension of a natural fractional parameter $\pi_f$ is a priori
bounded, where \emph{natural} means that $\pi_f$ is determined by an
LP which is logically interpretable in terms of an input graph $G$. A
striking result of Anderson, Dawar, Holm \cite{AndersonDH15} says that
the optimum value of an interpretable LP is expressible in FPC.  It
follows from the known immersion of FPC into the finite-variable
infinitary counting logic
$C^\omega_{\infty\omega}=\bigcup_{k=2}^\infty C^k_{\infty\omega}$ (see
\cite{Otto-b}), that each such $\pi_f$ is \kWL-invariant for some
$k$. Although this general theorem is applicable to many graph
parameters of interest, it is not \emph{a priori} evident how to
extract an explicit value of $k$ from the proof of the theorem, and in
any case such a value of $k$ seems unlikely to be optimal.

We are interested in \emph{explicit} and, possibly, \emph{exact}
bounds for the WL dimension. A first question here would be to
pinpoint which fractional parameters $\pi_f$ are
$\WL1$-invariant. This natural question, using the concept of
fractional isomorphism \cite{ScheinermanU97}, can be recast as
follows: Which \emph{fractional} graph parameters are invariant under
\emph{fractional} isomorphisms? It appears that this question has not
received adequate attention in the literature. The only earlier result
we could find is the \WL1-invariance of the fractional domination
number $\gamma_f$ shown in the Ph.D.~thesis of
Rubalcaba~\cite{Rubalcaba-th}.

We show that the fractional matching number $\nu_f$ is also a
fractional parameter preserved by fractional isomorphism. Indeed, the
matching number is an instance of the \emph{$F$-packing number
  $\pi^F$} of a graph, corresponding to $F=K_2$.  Here and throughout,
we use the standard notation $K_n$ for the complete graphs, $P_n$ for
the path graphs, and $C_n$ for the cycle graph on $n$ vertices.  In
general, $\pi^F(G)$ is the maximum number of vertex-disjoint subgraphs
of $G$ that are isomorphic to the fixed pattern graph $F$. While
the matching number is computable in polynomial time, computing
$\pi^F$ is NP-hard whenever $F$ has a connected component with at
least 3 vertices \cite{KirkpatrickH83}, in particular, for
$F\in\{P_3,K_3\}$.  Note that $K_3$-packing is the optimization
version of the archetypal NP-complete problem Partition Into Triangles
\cite[GT11]{GareyJ79}.  We show that the fractional $P_3$-packing
number $\nu^{P_3}_f$, like $\nu_f=\pi_f^{K_2}$, is \WL1-invariant,
whereas the WL dimension of the fractional triangle packing is~2.

In fact, we present a general treatment of fractional $F$-packing
numbers~$\pi^F_f$.  We begin in Section \ref{s:LPs} with introducing a
concept of equivalence between two linear programs $L_1$ and $L_2$
ensuring that equivalent $L_1$ and $L_2$ have equal optimum values.
Next, in Section \ref{s:starter}, we consider the standard optimization
versions of Set Packing and Hitting Set \cite[SP4 and SP8]{GareyJ79},
two of Karp's 21 NP-complete problems \cite{Karp72}.
These two generic problems generalize $F$-Packing and Dominating Set respectively.  
Their fractional versions have thoroughly been studied in hypergraph theory \cite{Lovasz75,Fueredi88}.
We observe that the
LP relaxations of Set Packing (or Hitting Set) are equivalent whenever
the incidence graphs of the input set systems are
\WL1-equivalent. This general fact readily implies Rubalcaba's
result~\cite{Rubalcaba-th} on the \WL1-invariance of the fractional
domination number and also shows that, if the pattern graph $F$ has
$\ell$ vertices, then the fractional $F$-packing number $\pi^F_f$ is
\WL k-invariant for some $k<2\,\ell$.  This bound for $k$ comes from a
logical definition of the instance of Set Packing corresponding to
$F$-Packing in terms of an input graph $G$ (see Section
\ref{ss:excess}).  Though the bound is quite decent, it does not need
to be optimal.  We elaborate on a more precise bound, where we need to
use additional combinatorial arguments even in the case of the
fractional matching.

We treat the fractional matching separately for expository purposes in Section \ref{s:starter}.
The general $F$-packing is considered in Section \ref{sec:pack}, where
our main result, Theorem \ref{thm:htwpack}, includes the aforementioned cases of $F=K_3,P_3$.

The \emph{edge-disjoint} version of $F$-Packing is
another problem that has intensively been studied in combinatorics and
optimization. Since it is known to be NP-hard for any pattern $F$
containing a connected component with at least 3 edges \cite{DorT97},
fractional relaxations have received much attention in the literature
\cite{Dross16,HaxellR01,Yuster05,Yuster07}.
We show that our techniques work well also in this case.
In particular, the WL dimension of the fractional edge-disjoint triangle packing number
$\rho^{K_3}_f$ is~2 (Theorem~\ref{thm:rho_f}).

\paragraph{\textbf{Integrality gap via invariance ratio.}}

Furthermore, we discuss the \emph{approximate} invariance of
\emph{integral} graph parameters expressible by integer linear
programs.  As a first example, note that the \WL1-invariance of the
fractional matching number $\nu_f$ has two consequences. The first
follows from the known fact \cite[Theorem 2.1.3]{ScheinermanU97} that
$\nu_f(G)=\nu(G)$ if $G$ is bipartite. This equality implies that \emph{over bipartite}
graphs even the integral parameter $\nu$ is \WL1-invariant;
cf.~\cite{BlassGS02,AtseriasM13}.

Another consequence concerns all graphs and is based on 
Lov{\'{a}}sz's inequality \cite[Theorem 5.21]{Fueredi88}
\begin{equation}
  \label{eq:nunuf}
\nu_f(G)\le\frac32(\tau(G)+\nu(G))\le\frac32\,\nu(G)  
\end{equation}
where $\tau(G)$ is the domination number of a graph $G$. As $\nu_f$ is
\WL1-invariant, it follows that
\begin{equation}
  \label{eq:nunu}
\nu(G)/\nu(H)\le3/2  
\end{equation}
for any pair of nonempty \WL1-equivalent graphs $G$ and $H$. The bound
\refeq{nunu} is tight, as seen for the \WL1-equivalent graphs
$G=C_{6s}$ and $H=2s\,C_3$. Consequently, the relationship between 
$\nu_f$ and $\nu$ given by \refeq{nunuf} is also
tight. This simple example shows that the exact value $k$ of the WL
dimension of a fractional parameter $\pi_f$, and the discrepancy of
the integral parameter $\pi$ over {\WL k}-invariant graphs together
yield a lower bound for the precision of approximating $\pi$
by~$\pi_f$.

Specifically, recall that the maximum 
\[
\max_G \frac{\pi_f(G)}{\pi(G)},
\]
(respectively, $\max_G \pi(G)/\pi_f(G)$ for minimization problems) is
known as the \emph{integrality gap} of $\pi_f$. The integrality gap is
important for a computationally hard graph parameter $\pi$, as it
bounds how well the polynomial-time computable parameter $\pi_f$
approximates $\pi$. 

On the other hand, we define the \emph{{\WL k}-invariance ratio} for
the parameter $\pi$ as
\[
\max_{G,H}{\frac{\pi(G)}{\pi(H)}},
\]
where the quotient is maximized over all {\WL k}-equivalent graph
pairs $(G,H)$.  If $\pi$ is {\WL k}-invariant, then the {\WL
  k}-invariance ratio bounds the integrality gap from below. The
following question suggests itself: How tight is this lower bound? In
this regard, let us look at the fractional domination number~$\gamma_f$ again.

A general bound by Lov{\'{a}}sz \cite{Lovasz75} on the integrality gap
of the fractional covering number for hypergraphs implies for a graph $G$ that
$\gamma(G)\le(1+\ln(1+\Delta(G)))\,\gamma_f(G)$, where $\Delta(G)$
denotes the maximum vertex degree of $G$. It follows that
the integrality gap for the domination number is at most logarithmic.
More precisely,
\begin{equation}
  \label{eq:gamma-int-gap}
\frac{\gamma(G)}{\gamma_f(G)}\le 1+\ln n
\end{equation}
for a non-empty graph $G$ with $n$ vertices. This results in an
LP-based algorithm for approximation of $\gamma(G)$ within a
logarithmic factor, which is essentially optimal as $\gamma(G)$ is
inapproximable within a factor of $(1-\epsilon)\ln n$ unless
$\mathrm{NP}\subseteq\mathrm{DTIME}(n^{O(\log\log n)})$; see \cite{ChlebikC08}. Recall that $\gamma_f$ is \WL1-invariant.  Along
with \refeq{gamma-int-gap}, this implies that the \WL1-invariance
ratio of $\gamma$ is at most logarithmic. On the other hand, Chappell
et al.~\cite{ChappellGH17} have shown that the bound
\refeq{gamma-int-gap} is tight up to a constant factor. In Section
\ref{s:inv} we prove an $\Omega(\log n)$ lower bound even for the
\WL1-invariance ratio of $\gamma$ over $n$-vertex graphs. This implies
the integrality gap lower bound~\cite{ChappellGH17}, reproving it from
a different perspective. Moreover, our proof provides a solution to
Problem 3.4 in \cite{ChappellGH17} asking for an \emph{explicit} 
construction of graphs with logarithmic integrality gap for~$\gamma_f$.

Next, we consider the fractional edge-disjoint triangle packing number
$\rho^{K_3}_f$.  A general bound for the integrality gap of the
fractional matching number of a hypergraph \cite{Fueredi81} implies
that the integrality gap of $\rho^{K_3}_f$ is at most 2 (see Theorem
\ref{thm:ratio}).  This yields a polynomial-time algorithm
approximating $\rho^{K_3}$ within a factor of 2, which is competitive
with the greedy algorithm whose approximation ratio is 3; see
\cite{Yuster07}.\footnote{Though there are approaches
  \cite{HurkensS89} giving a better approximation ratio of
  $\frac32+\epsilon$, it is known \cite{FederS12} that there is no
  polynomial-time approximation scheme (PTAS) for $\rho^{K_3}$ unless
  $\mathrm{NP}=\mathrm{P}$.}  We observe that the upper bound of 2 is
sharp, as 2 is also a lower bound for the \WL2-invariance ratio
of~$\rho^{K_3}$.

Upper bounds for the \emph{additive} integrality gap of $\rho^{K_3}_f$
prove to be of considerable interest, implying a PTAS for $\rho^{K_3}$
on dense graphs \cite{HaxellR01,Yuster05}.  Motivated by this fact, in
Section \ref{s:inv} we obtain a lower bound also for the
\WL2-invariance \emph{difference} of~$\rho^{K_3}$.

\paragraph{\textbf{Related work.}}

Atserias and Dawar \cite{AtseriasD18} have shown that the
\WL1-invariance ratio for the vertex cover number $\tau$ is at most
2. Alternatively, this bound also follows from the \WL1-invariance of
$\nu_f$ (which implies the \WL1-invariance of $\tau_f$ as
$\tau_f=\nu_f$ by LP duality) combined with a standard rounding
argument. The argument presented in \cite{AtseriasD18} is 
different\footnote{The approach of \cite{AtseriasD18} is based on
  constructing weighted graphs $X_G$ and $X_H$ that are isomorphic if
  $G\eqkkwl1H$. The vertex cover number $\tau(G)$ is estimated from
  below and from above in terms of weighted vertex covers of $X_G$,
  and $\tau_f(G)$ appears as a technical tool in the argument.} and
alone does not yield \WL1-invariance of the fractional vertex
cover~$\tau_f$.

The bound of 2 for the \WL1-invariance ratio of $\tau$ is optimal.
Atserias and Dawar \cite{AtseriasD18} also show that the \WL
k-invariance ratio for $\tau$ is at least $7/6$ for each $k$. This
implies an unconditional inapproximability result for Vertex Cover in
the model of encoding-invariant computations expressible in FPC. It
remains open if similar lower bounds on the invariance ratios can be shown for
Dominating Set and Triangle Packing. 
For each parameter $\pi$ under consideration, we mainly
focus on \WL{k}-invariance for $k$ equal to the 
WL dimension of $\pi$. This focus is motivated by applications
to proving lower bounds for the integrality gap between $\pi$ and $\pi_f$
as discussed above.

\section{Reductions between linear programs}\label{s:LPs}

A \emph{linear program (LP)} is an optimization problem of the form
“maximize (or minimize) $a^tx$ subject to $Mx\le b$”, where
$a\in\Rset^n$, $b\in\Rset^m$, and $M\in\Rset^{m\times n}$ is an $m\times n$ matrix.
The variable $x$ varies over all vectors in $\Rset^n$
with nonnegative entries (which we denote by $x \ge 0$). Any vector
$x$ satisfying the constraints $Mx \le b$, $x \ge 0$ is called a
\emph{feasible solution} and the function $x\mapsto a^tx$ is called
the \emph{objective function}. We denote an LP with parameters $a,M,b$
by $LP(a,M,b,opt)$, where $opt=\min$ if the problem is minimization,
and $opt=\max$ if it is maximization. The optimum value of the
objective function over all feasible solutions is called the
\emph{value} of the program $L=LP(a,M,b,opt)$ and denoted by~$val(L)$.

Our goal now is to introduce an equivalence relation between LPs
ensuring equality of their values. We begin with a motivating discussion.

\paragraph{\textbf{Isomorphic and isometric LPs.}}

By duality, we can restrict our attention to maximization problems.
For the present discussion, we consider linear programs
$L_1=LP(a,M,b,\max)$ such that all feasible solutions $x$ satisfy
$Mx=b$.  This is ensured by the standard construction of augmenting
the LP with slack variables. With this LP we associate the linear
transformation $\alpha\function{\Rset^n}{\Rset^m}$ defined by
$\alpha(x)=Mx$. Let $L_2=LP(c,N,d,\max)$ be another LP with associated
linear transformation $\beta\function{\Rset^n}{\Rset^m}$, where
$\beta(x)=Nx$.

We say that $L_1$ and $L_2$ are \emph{isomorphic} if the following
two conditions hold:

\begin{enumerate}[(A)]
\item There is a permutation $\phi$ of the $n$ variables
  $x_1,\ldots,x_n$ and a permutation $\psi$ of the $m$ equations such
  that if $Z\in\Rset^{n\times n}$ and $Y\in\Rset^{m\times m}$ are the
  permutation matrices corresponding to $\phi$ and $\psi$
  respectively, then the linear transformations $\phi(x)=Zx$ and
  $\psi(w)=Yw$ make the following diagram
$$
\begin{CD}
\Rset^n @>{\alpha}>>\Rset^m@.\\
@A{\phi}AA @AA{\psi}A@.\\
\Rset^n @>>{\beta}>\Rset^m@.
\end{CD}
$$
commute. That is, $\alpha\phi=\psi\beta$.
\end{enumerate}
The second condition is
\begin{enumerate}[(A)]\setcounter{enumi}{1}
\item
$\phi(c)=a$ and $\psi(d)=b$.
\end{enumerate}

Suppose that $L_1$ and $L_2$ are isomorphic. If $x$ is a feasible
solution of $L_2$, which means $\beta(x)=d$, then $x'=\phi(x)$ is a
feasible solution of $L_1$. Indeed,
\begin{equation}
  \label{eq:alphax}
  \alpha(x')=\alpha(\phi(x))=\psi(\beta(x))=\psi(d)=b.
\end{equation}
Moreover, 
\begin{equation}\label{eq:atx}
  a^tx'=\phi(c)^t\phi(x)=c^tx,
\end{equation}
implying that $val(L_1)\ge val(L_2)$. Since the isomorphism of LPs is
clearly an equivalence relation, it follows by symmetry that
$val(L_1)=val(L_2)$.

More generally, suppose there are matrices $Y\in\mathbb{R}^{n\times m}$
and $Z\in\mathbb{R}^{m\times n}$ such that they satisfy the
orthogonality conditions $Y^tY=I_m$ and $Z^tZ=I_n$ (where $I_n$ is the
$n\times n$ identity matrix). Then we say $L_1$ and $L_2$
\emph{isometric} if conditions (A) and (B) are fulfilled with
$\phi(x)=Zx$ and $\psi(w)=Yw$. Equations \refeq{alphax} and
\refeq{atx} remain true as
$$
\phi(c)^t\phi(x)=(Zc)^tZx=c^tZ^tZx=c^tI_nx=c^tx.
$$

Hence, $val(L_1)=val(L_2)$ also holds for isometric $L_1$ and
$L_2$. This follows from the symmetry of the isometry relation, which
in its turn follows from the orthogonality conditions satisfied by
matrices $Y$ and $Z$.  

We now consider an equivalence concept for LPs, more general than the
orthogonality conditions, ensuring the equality of LP values.

\paragraph{\textbf{Equivalence of LPs.}}
  Let $L_1=LP(a,M,b,opt)$ and $L_2=LP(c,N,d,opt)$ be linear programs
  (in general form), where $a,c\in\Rset^n$, $b,d\in\Rset^m$,
  $M,N\in\Rset^{m\times n}$ and $opt\in\Set{\min,\max}$. We say that
  $L_1$ \emph{reduces} to $L_2$ ($L_1\le L_2$ for short), if
  there are matrices $Y\in\Rset^{m\times m}$ and $Z\in\Rset^{n\times
    n}$ such that
  \begin{itemize}
  \item $Y,Z\ge0$\vspace{-2mm}
  \item $a^tZ \;\diamondsuit\; c^t$, where $\diamondsuit =\;
    \begin{cases}
      \le,&\text{if } opt=\min \\
      \ge,&\text{if } opt=\max
    \end{cases}$\vspace{-2mm}
  \item $MZ\le YN$
  \item $Yd\le b$
  \end{itemize}
  $L_1$ and $L_2$ are said to be \emph{equivalent} if $L_1\le L_2$ and $L_2\le L_1$.

\begin{theorem}\label{thm:lpred}
  Equivalent linear programs $L_1$ and $L_2$ have equal values, i.e.,
  $val(L_1)=val(L_2)$.
\end{theorem}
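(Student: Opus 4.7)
The plan is to exploit symmetry: since equivalence is defined as the conjunction of two reductions $L_1\le L_2$ and $L_2\le L_1$, it is enough to prove that a single reduction $L_1\le L_2$ already forces $val(L_1)\diamondsuit val(L_2)$ in the appropriate direction (namely, $val(L_1)\ge val(L_2)$ for maximization and $val(L_1)\le val(L_2)$ for minimization). Applying this one-directional statement to both reductions then immediately yields $val(L_1)=val(L_2)$.

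To establish the one-directional claim, the natural strategy is to transport feasible solutions of $L_2$ to feasible solutions of $L_1$ via the matrix $Z$ from the definition of the reduction. Concretely, given any feasible $x$ for $L_2$, i.e.\ $x\ge 0$ and $Nx\le d$, I would set $x'=Zx$ and check three things in turn. First, $x'\ge 0$, which follows from $Z\ge 0$ and $x\ge 0$. Second, $x'$ is feasible for $L_1$: chaining the inequalities provided by the reduction,
\[
Mx' \;=\; MZx \;\le\; YNx \;\le\; Yd \;\le\; b,
\]
where $YNx\ge MZx$ uses $YN\ge MZ$ together with $x\ge 0$, and $Yd\ge YNx$ uses $Y\ge 0$ together with $Nx\le d$. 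Third, the objective value of $x'$ compares with that of $x$ in the correct direction: the inequality $a^tZ\diamondsuit c^t$ combined with $x\ge 0$ gives $a^tx'=a^tZx\diamondsuit c^tx$, where $\diamondsuit$ is $\ge$ for maximization and $\le$ for minimization.

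Putting these together for the maximization case, every feasible $x$ of $L_2$ yields a feasible $x'$ of $L_1$ with $a^tx'\ge c^tx$, so taking the supremum over $x$ on the right gives $val(L_1)\ge val(L_2)$; the minimization case is symmetric with inequalities reversed. Combining with the reverse reduction $L_2\le L_1$ completes the proof.

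I do not anticipate a real obstacle here; the argument is a direct bookkeeping exercise. The one subtlety worth flagging is that the three sign conditions ($Y,Z\ge 0$ and $x\ge 0$) are all genuinely needed to preserve inequalities under matrix multiplication, so the main care in writing the proof is to cite the correct nonnegativity hypothesis at each step rather than to find a clever idea.
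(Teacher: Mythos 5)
Your proposal is correct and follows essentially the same route as the paper: transport a feasible solution $x$ of $L_2$ to $x'=Zx$, chain $Mx'=MZx\le YNx\le Yd\le b$ and $a^tx'=a^tZx\;\diamondsuit\;c^tx$, and conclude by applying the one-directional inequality to both reductions. Your explicit check that $x'\ge 0$ and your bookkeeping of which nonnegativity hypothesis justifies each inequality are welcome details the paper leaves implicit.
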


\begin{proof}
  Let $L_1=LP(a,M,b,opt)$ and $L_2=LP(c,N,d,opt)$ and assume $L_1\le
  L_2$ via $(Y,Z)$. We show that for any feasible solution $x$ of
  $L_2$ we get a feasible solution $x'=Zx$ of $L_1$ with
  $a^tx'\;\diamondsuit\; c^tx$, where the relation symbol
  $\diamondsuit$ is as defined above. Indeed,
  \[
    Mx'=\underbrace{MZ\!}_{\le YN}x \le Y\!\underbrace{\!Nx}_{\le d}\le
    Yd \le b\text{ ~and~ }
    a^tx'=\underbrace{a^tZ}_{\diamondsuit\; c^t}x\;\diamondsuit\; c^tx.
  \]
  Thus, $L_1\le L_2$ implies $val(L_1)\;\diamondsuit\; val(L_2)$ and
  the theorem follows.
\end{proof}

Note that isometric LPs are equivalent. We now describe a different
kind of equivalent LPs.

\paragraph{\textbf{LPs with fractionally isomorphic matrices.}}
Recall that a square matrix $X\ge0$ is \emph{doubly stochastic} if its
entries in each row and column sum up to 1.  We call two $m\times n$
matrices $M$ and $N$ \emph{fractionally isomorphic} if there are
doubly stochastic matrices $Y\in\Rset^{m\times m}$ and
$Z\in\Rset^{n\times n}$ such that
\begin{equation}
  \label{eq:fim}
  MZ=YN\text{ and }NZ^t=Y^tM.
\end{equation}
Grohe et al.~\cite[Eq.~(5.1)-(5.2)]{GroheKMS14} discuss similar
definitions. Their purpose is to use fractional isomorphism and color
refinement to reduce the dimension of linear equations and LPs.
The meaning of \refeq{fim} will be clear from the proof of Theorem \ref{thm:set-pack} below.

\begin{lemma}\label{lem:fiMN}
  If $M$ and $N$ are fractionally isomorphic $m\times n$ matrices,
  then the linear programs $LP(\one_n,M,\one_m,opt)$ and
  $LP(\one_n,N,\one_m,opt)$ are equivalent, where $\one_n$ denotes the
  $n$-dimensional all-ones vector.
\end{lemma}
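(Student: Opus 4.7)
The plan is to verify the definition of equivalence directly, by exhibiting explicit reduction pairs in both directions using the data already supplied by the hypothesis. Concretely, I would show that the pair $(Y,Z)$ witnesses $LP(\one_n,M,\one_m,opt)\le LP(\one_n,N,\one_m,opt)$, and that the pair $(Y^t,Z^t)$ witnesses the opposite reduction.

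For the first direction, I would check the four bullet points of the definition of $L_1\le L_2$ in turn. Nonnegativity of $Y,Z$ is immediate since they are doubly stochastic. The constraint $MZ\le YN$ is in fact satisfied with equality, by the first half of the hypothesis \refeq{fim}. For the constraint $Yd\le b$ with $d=b=\one_m$, I would use that the rows of a doubly stochastic matrix sum to $1$, giving $Y\one_m=\one_m$. Finally, for the objective comparison $a^tZ\;\diamondsuit\;c^t$ with $a=c=\one_n$, I would use that the columns of $Z$ sum to $1$, so $\one_n^tZ=\one_n^t$, and hence the relation holds trivially (with equality) regardless of whether $\diamondsuit$ is $\le$ or $\ge$.

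For the opposite direction, the key observation is that the transpose of a doubly stochastic matrix is doubly stochastic, so $Y^t\in\Rset^{m\times m}$ and $Z^t\in\Rset^{n\times n}$ have the correct shapes and signs to serve as a reduction pair from $L_2$ to $L_1$. The conditions $Y^t\one_m=\one_m$ and $\one_n^tZ^t=\one_n^t$ follow again from double stochasticity (this time reading rows as columns and vice versa). The nontrivial algebraic constraint $NZ^t\le Y^tM$ is precisely the second equation of the hypothesis \refeq{fim}, which is exactly why the definition of fractional isomorphism packages together both $MZ=YN$ and $NZ^t=Y^tM$ rather than just the first identity.

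Combining the two reductions with Theorem~\ref{thm:lpred} yields $val(LP(\one_n,M,\one_m,opt))=val(LP(\one_n,N,\one_m,opt))$, which is the desired equivalence. There is no real obstacle here; the proof is essentially a bookkeeping exercise, and the only mild subtlety is noticing that the two conjuncts of \refeq{fim} correspond symmetrically to the two opposite reductions needed.
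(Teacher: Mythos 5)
Your proof is correct and follows exactly the same route as the paper: the pair $(Y,Z)$ together with the first identity of \refeq{fim} and double stochasticity gives $L_1\le L_2$, while $(Y^t,Z^t)$ and the second identity give $L_2\le L_1$. The paper's proof is just a more compressed statement of the same bookkeeping.
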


\begin{proof}
  Since the matrices $Y$ and $Z$ in \refeq{fim} are doubly stochastic,
  $Y\one_m=\one_m$ and $\one_n^tZ=\one_n^t$. Along with the first
  equality in \refeq{fim}, these equalities imply that $L_1\le
  L_2$. The reduction $L_2\le L_1$ follows similarly from the second
  equality in \refeq{fim} as $Y^t$ and $Z^t$ are doubly stochastic.
\end{proof}

\section{The Weisfeiler-Leman algorithm: Notation and formal definitions}

We consider undirected graphs, possibly with colored vertices.  The
vertex set and the edge set of a graph $G$ are denoted by $V(G)$ and
$E(G)$ respectively.  For $\barx=(x_1,\dots,x_k)$ in $V(G)^k$, let
$\alg k0G\barx$ be the $k\times k$ matrix $(m_{i,j})$ with $m_{i,j}=1$
if $x_ix_j\in E(G)$, $m_{i,j}=2$ if $x_i=x_j$ and $m_{i,j}=0$
otherwise.  We also augment $\alg k0G\barx$ by the vector of the
colors of $x_1,\dots,x_k$ if the graph $G$ is vertex-colored.  $\alg
k0G\barx$ encodes the \emph{ordered isomorphism type} of $\barx$ in
$G$ and serves as an initial coloring of $V(G)^k$ for \WL k.  In the
$r^{th}$ refinement round, \WL k computes a coloring $\alg krG\cdot$
of the Cartesian power $V(G)^k$ such that, if $\Pa_r$ is the color
partition of $V(G)^k$ according to $\alg krG\cdot$, then $\Pa_{r+1}$
is finer than or equal to $\Pa_r$ for every $r\ge0$.  Specifically,
\WL1 computes $\alg 1{r+1}Gx=(\alg 1{r}Gx,\Mset{\alg 1{r}Gy}{y\in
  N(x)})$, where $N(x)$ is the neighborhood of $x$ and
$\{\!\!\{\,\}\!\!\}$ denotes a multiset. If $k\geq 2$, \WL k refines
the coloring by $\alg k{r+1}Gx=(\alg krG\barx,\Mset{ (\alg
  krG{\barx_1^u},\dots,\alg krG{\barx_k^u}}{u\in V(G)})$, where
$\barx_i^u$ is the tuple $(x_1,\dots,x_{i-1},u,x_{i+1},\dots,x_k)$.
If $G$ has $n$ vertices, the color partition $\Pa_r$ stabilizes in at
most $n^k$ rounds. We define $\algkstab G\barx=\alg k{n^k}G\barx$ and
$\algkstabb G=\Mset{\algkstab G\barx}{\barx\in V(G)^k}$.  Now,
$G\eqkwl H$ if $\algkstabb G=\algkstabb H$.

The color partition of $V(G)$ according to $\algstabbb1(G,x)$ is
\emph{equitable}: for any color classes $C$ and $C'$, each vertex in
$C$ has the same number of neighbors in $C'$. Moreover, if $G$ is
vertex-colored, then the original colors of all vertices in each $C$
are the same. It is known \cite[Theorem~6.5.1]{ScheinermanU97} that $G\eqkkwl1 H$ 
exactly when $G$ and $H$ have a common equitable partition with the same neighborhood 
numbers in both graphs, after a suitable identification of vertex sets $V(G)$ and $V(H)$
(the coarsest such partition is actually the partition defined by the coloring~$\algstabbb 1 (G,\cdot)$).

Let $G$ and $H$ be graphs with vertex set $\{1,\ldots,n\}$, and let
$A$ and $B$ be the adjacency matrices of $G$ and $H$,
respectively. Then $G$ and $H$ are isomorphic if and only if $AX=XB$
for some $n\times n$ permutation matrix $X$. The linear programming
relaxation allows $X$ to be a doubly stochastic matrix.  If such an
$X$ exists,
$G$ and $H$ are said to be \emph{fractionally isomorphic}.
If $G$ and $H$ are colored graphs with the same partition of the vertex set into color classes, 
then it is additionally required that $X_{u,v}=0$ whenever $u$ and $v$ are of different colors.
Building on \cite{Tinhofer86}, it is shown by \cite{RamanaSU94} that two graphs
are indistinguishable by \WL1 if and only if they are
fractionally isomorphic (see also~\cite[Theorem 6.5.1]{ScheinermanU97}).

\section{Getting started}\label{s:starter}

\subsection{Fractional Set Packing}\label{ss:FSP}

The \emph{Set Packing} problem is to maximize the number of pairwise
disjoint sets in a given family of sets $\calS=\{S_1,\ldots,S_n\}$,
where each $S_j\subset\{1,\ldots,m\}$. The maximum is called in
combinatorics the \emph{matching number} of the hypergraph $\calS$ and
denoted by $\nu(\calS)$.  The fractional version of the matching
number can be expressed as a linear program
$LP(\calS)=LP(\mathbbm{1}_n,M,\mathbbm{1}_m,\max)$ where $M$ is the
$m\times n$ incidence matrix of $\calS$:
\begin{eqnarray*}
\max \sum_{i=1}^n x_i&&\text{under} \\
x_i&\ge&0\text{ for every }i\le n,\\
\sum_{i\,:\,S_i\ni j}x_i&\le&1\text{ for every }j\le m.
\end{eqnarray*}
The optimum value 
$$
\nu_f(\calS)=val(LP(\calS))
$$ 
is called the \emph{fractional matching number of}~$\calS$.

Let $\ig\calS$ denote the incidence graph of $\calS$. That is,
$\ig\calS$ is the vertex-colored bipartite graph with biadjacency
matrix $M$, where the bipartition of the vertex set is defined by $m$
red vertices and $n$ blue vertices. A red vertex $j$ is adjacent to a
blue vertex $i$ if $j\in S_i$.

\begin{theorem}\label{thm:set-pack}
  Let $\calS_1$ and $\calS_2$ be two families each consisting of $n$
  subsets of the set $\{1,\ldots,m\}$. If
  $\ig{\calS_1}\eqkkwl1\ig{\calS_2}$, then
  $\nu_f(\calS_1)=\nu_f(\calS_2)$.
\end{theorem}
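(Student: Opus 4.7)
The plan is to package Theorem \ref{thm:set-pack} as an immediate consequence of Lemma \ref{lem:fiMN}. Concretely, I would show that \WL1-equivalence of the (vertex-colored) incidence graphs $\ig{\calS_1}$ and $\ig{\calS_2}$ forces the incidence matrices $M_1,M_2\in\Rset^{m\times n}$ of $\calS_1,\calS_2$ to be fractionally isomorphic in the sense of \refeq{fim}. Since $\nu_f(\calS_i)=val\bigl(LP(\one_n,M_i,\one_m,\max)\bigr)$ by definition, Lemma \ref{lem:fiMN} combined with Theorem \ref{thm:lpred} then gives $\nu_f(\calS_1)=\nu_f(\calS_2)$.

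To produce the doubly stochastic matrices $Y,Z$ required by \refeq{fim}, I would invoke the Ramana--Scheinerman--Ullman characterization stated at the end of Section~3: $\ig{\calS_1}\eqkkwl1\ig{\calS_2}$ is equivalent to their fractional isomorphism as \emph{colored} graphs. Ordering the vertices of each incidence graph so that the $m$ red (element) vertices come before the $n$ blue (set) vertices, the adjacency matrices take the block form
\[
A_i=\begin{pmatrix} 0 & M_i\\ M_i^t & 0\end{pmatrix},\qquad i=1,2.
\]
The color constraint forces the doubly stochastic matrix $X$ witnessing fractional isomorphism to be block-diagonal, $X=\mathrm{diag}(Y,Z)$ with $Y\in\Rset^{m\times m}$ and $Z\in\Rset^{n\times n}$ both doubly stochastic. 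Expanding the identity $A_1X=XA_2$ in blocks yields
\[
M_1Z=YM_2 \quad\text{and}\quad M_1^tY=ZM_2^t.
\]
Transposing the second equation gives $M_2Z^t=Y^tM_1$, so $M_1$ and $M_2$ satisfy both parts of \refeq{fim} and are therefore fractionally isomorphic.

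Having done this, the rest is mechanical: by Lemma \ref{lem:fiMN} the programs $LP(\one_n,M_1,\one_m,\max)$ and $LP(\one_n,M_2,\one_m,\max)$ are equivalent, and by Theorem \ref{thm:lpred} they share a common value, which is exactly $\nu_f(\calS_1)=\nu_f(\calS_2)$.

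The only real subtlety, and the step I would be most careful about, is the justification of the block-diagonal structure of $X$. This relies on invoking the \emph{colored} version of fractional isomorphism: the vertex coloring of $\ig\calS$ (red elements, blue sets) must be part of the \WL1 input, so that the doubly stochastic coupling has to send each color class to itself and hence vanishes on the off-diagonal blocks. Once this is in place, the block computation of $A_1X=XA_2$ delivers the two matrix identities of \refeq{fim} without further work.
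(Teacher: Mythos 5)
Your proposal is correct and follows essentially the same route as the paper's own proof: both pass from the colored fractional isomorphism of the incidence graphs to a block-diagonal doubly stochastic $X=\mathrm{diag}(Y,Z)$, expand $A_1X=XA_2$ to obtain the two identities of \refeq{fim}, and then invoke Lemma \ref{lem:fiMN} together with Theorem \ref{thm:lpred}. The one step you flag as subtle --- that the vertex coloring forces $X$ to vanish on the off-diagonal blocks --- is handled identically in the paper.
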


\begin{proof}
  Denote the incidence matrices of $\calS_1$ and $\calS_2$ by $M$ and
  $N$ respectively.  Then
$$
A_1=
\begin{pmatrix}
0&M\\
M^t&0  
\end{pmatrix}
\text{ and }
A_2=
\begin{pmatrix}
0&N\\
N^t&0  
\end{pmatrix}
$$
are the adjacency matrices of $\ig{\calS_1}$ and $\ig{\calS_2}$
respectively.  Since $\ig{\calS_1}$ and $\ig{\calS_2}$ are
indistinguishable by color refinement,
these graphs are fractionally
isomorphic, that is, there is a doubly stochastic matrix $X$ such that
\begin{equation}
  \label{eq:frac-iso}
A_1X=XA_2  
\end{equation}
and $X_{uv}=0$ whenever $u$ and $v$ are from different vertex color
classes.  The latter condition means that $X$ is the direct sum of an
$m\times m$ doubly stochastic matrix $Y$ and an $n\times n$ doubly
stochastic matrix $Z$, that is,
$$
X=
\begin{pmatrix}
Y&0\\
0&Z  
\end{pmatrix}.
$$
Therefore, Equality \refeq{frac-iso} reads
$$
\begin{pmatrix}
0&M\\
M^t&0  
\end{pmatrix}
\begin{pmatrix}
Y&0\\
0&Z  
\end{pmatrix}
=
\begin{pmatrix}
Y&0\\
0&Z  
\end{pmatrix}
\begin{pmatrix}
0&N\\
N^t&0  
\end{pmatrix},
$$
yielding
$$
MZ=YN\text{ and }M^tY=ZN^t,
$$
that is, the matrices $M$ and $N$ are fractionally isomorphic.
Lemma \ref{lem:fiMN} implies that $LP(\calS_1)$ and $LP(\calS_2)$ are equivalent. Therefore, these LPs have equal values
by Theorem~\ref{thm:lpred}.
\end{proof}

The dual version of $LP(\calS)$ is the following minimization problem:
\begin{eqnarray*}
\min \sum_{j=1}^m y_j&&\text{under} \\
y_j&\ge&0\text{ for every }j\le m,\\
\sum_{j\in S_i}y_j&\ge&1\text{ for every }i\le n.
\end{eqnarray*}
This is an LP relaxation of the Hitting Set problem: Find a smallest set $Y\subset\{1,\ldots,m\}$
(called a \emph{hitting set}, \emph{cover}, or \emph{transversal})
having a non-empty intersection with each $S_i$.
Denote the optimum value by $\tau_f(\calS)$ and note that $\tau_f(\calS)=\nu_f(\calS)$
by LP duality.

\subsection{\WL1-invariance of the fractional domination number}\label{ss:gamma_f}

The \emph{closed neighborhood} of a vertex $x$ is defined as
$N[x]=N(x)\cup\{x\}$.  A set $D\subseteq V(G)$ is \emph{dominating} in
$G$ if $V(G)=\bigcup_{x\in D}N[x]$.  The \emph{domination number}
$\gamma(G)$ is the minimum cardinality of a dominating set in~$G$.

As a warm-up example, consider the fractional Dominating Set problem,
whose \WL1-invariance was established in~\cite{Rubalcaba-th}:
\begin{eqnarray*}
\min \sum_{v\in V(G)} y_v&&\text{under} \\
y_v&\ge&0\text{ for every }v\in V(G),\\
\sum_{v\in N[u]}y_v&\ge&1\text{ for every }u\in V(G).
\end{eqnarray*}

The value of this LP is the \emph{fractional domination number}
$\gamma_f(G)$. We can see this as the fractional Hitting Set problem
for $\calS=\calS_G$ consisting of the closed neighborhoods of all
vertices in $G$. The incidence matrix $M$ of $\calS_G$ and the
adjacency matrix $A$ of the graph $G$ are related by the equality
$M=A+I$. If $G\eqkkwl1 H$, then $G$ and $H$ are fractionally
isomorphic. That is, $AX=XB$ for a doubly stochastic $X$, where $B$ is
the adjacency matrix of $H$. It follows that $MX=(A+I)X=X(B+I)=XN$,
where $N$ is the incidence matrix of $\calS_H$. Similarly,
$X^tM=NX^t$. Therefore, $\gamma_f(G)=\gamma_f(H)$ by Lemma
\ref{lem:fiMN} and Theorem \ref{thm:lpred}. This follows also from
Theorem \ref{thm:set-pack} as $\ig{\calS_G}\eqkkwl1\ig{\calS_H}$ and
$\gamma_f(G)=\tau_f(\calS_G)=\nu_f(\calS_G)$ by LP duality.

\subsection{WL invariance through first-order interpretability}\label{ss:excess}

As we have just seen, given an instance graph $G$ of the fractional
Dominating Set problem, we can define an instance $\calS_G$ of the
fractional Hitting Set problem having the same LP value. The next
definition formalizes a general setting that is applicable to
essentially any logical formalism.

\begin{definition}
  We say that an instance $\calS_G$ of Fractional Set Packing or its
  dual version is definable over a graph $G$ with \emph{excess} $e$ if
$$
G\eqkkwl{(1+e)} H\implies \ig{\calS_G}\eqkkwl1\ig{\calS_H}.
$$
\end{definition}

This definition is very general. It includes settings where the
incidence graph $\ig{\calS_G}$ is \emph{first-order interpretable} in
the graph $G$ in the sense of \cite[Chapter 12.3]{EbbinghausF-b}.  In
other words, both the color predicate of $\ig{\calS_G}$ (defining the
red/blue bipartition of $\ig{\calS_G}$) and the adjacency relation of
$\ig{\calS_G}$ are first-order expressible in terms of the adjacency
relation of $G$, on $k$-tuples of vertices $V(G)^k$ for some $k$. The
number $k$ is the \emph{width} of the interpretation. In such a case,
if there is a first-order sentence over $s$ variables that is true on
$\ig{\calS_G}$ but false on $\ig{\calS_H}$, then there is a
first-order sentence over $sk$ variables that is true on $G$ but false
on $H$. Now, by the Cai-Fürer-Immerman result \cite{CaiFI92} that two
structures are $\kWL$-equivalent iff they are equivalent in the
$(k+1)$-variable counting logic $C^{k+1}$, we obtain the following
corollary from Theorem \ref{thm:set-pack}.

\begin{corollary}\label{cor:interpr}
  Let $\pi_f$ be a fractional graph parameter such that
  $\pi_f(G)=\nu_f(\calS_G)$, where $\calS_G$ admits a first-order
  interpretation (possibly with counting quantifiers) of width $k$ in
  $G$. Then $\calS_G$ is definable over $G$ with excess $2(k-1)$ and,
  consequently, $\pi_f$ is \WL{(2k-1)}-invariant.
\end{corollary}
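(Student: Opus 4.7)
The plan is to combine the Cai--Fürer--Immerman characterization of $\kWL$-equivalence with the standard variable-counting behavior of first-order interpretations, and then invoke Theorem~\ref{thm:set-pack}. The target reduces to establishing that $\calS_G$ is definable over $G$ with excess $2(k-1)$; the $\WL{(2k-1)}$-invariance of $\pi_f$ then follows immediately, because $G\eqkkwl{(2k-1)}H$ yields $\ig{\calS_G}\eqkkwl1\ig{\calS_H}$ by the excess bound, hence $\nu_f(\calS_G)=\nu_f(\calS_H)$ by Theorem~\ref{thm:set-pack}, and $\pi_f(G)=\nu_f(\calS_G)=\nu_f(\calS_H)=\pi_f(H)$ by assumption.

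To bound the excess, I would use the width-$k$ interpretation $\mathcal{I}$ that produces $\ig{\calS_G}$ from $G$: the universe of $\ig{\calS_G}$ is (a quotient of) a definable subset of $V(G)^k$, and the color and adjacency predicates of $\ig{\calS_G}$ are given by first-order formulas with counting quantifiers in the language of $G$, each involving $k$-tuples of graph vertices. The standard substitution lemma for interpretations says that for every $C^\ell$-sentence $\varphi$ in the vocabulary of colored bipartite graphs there is a translated sentence $\varphi^{\mathcal{I}}$ in the vocabulary of graphs such that $\ig{\calS_G}\models\varphi$ iff $G\models\varphi^{\mathcal{I}}$, and the number of variables in $\varphi^{\mathcal{I}}$ is at most $\ell\cdot k$ (each variable of $\varphi$ is simulated by a block of $k$ variables over $V(G)$, and these blocks can be reused in the translation). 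Consequently, if $\ig{\calS_G}$ and $\ig{\calS_H}$ are separated by a $C^2$-sentence, then $G$ and $H$ are separated by a $C^{2k}$-sentence.

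Now I invoke the Cai--Fürer--Immerman theorem cited in the excerpt: two structures are $\eqkkwl{r}$-equivalent iff they satisfy the same sentences of $C^{r+1}$. Reading this in both directions, $\ig{\calS_G}\not\eqkkwl1\ig{\calS_H}$ forces a $C^2$ separator for the incidence graphs, which lifts under $\mathcal{I}$ to a $C^{2k}$ separator for $G$ and $H$, and hence $G\not\eqkkwl{(2k-1)}H$. The contrapositive is exactly the excess-$2(k-1)$ condition
\[
G\eqkkwl{(2k-1)}H\;\Longrightarrow\;\ig{\calS_G}\eqkkwl1\ig{\calS_H},
\]
completing the first assertion, and the second assertion follows as above. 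The main point requiring care is the translation lemma in the presence of counting quantifiers and of a width-$k$ interpretation whose universe may be a quotient of a definable subset of $V(G)^k$; with counting quantifiers, however, counting in $\ig{\calS_G}$ translates uniformly to counting equivalence classes in $V(G)^k$, so the variable blow-up factor remains $k$, as needed.
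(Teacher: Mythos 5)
Your proposal is correct and follows essentially the same route as the paper: translate a distinguishing $C^2$-sentence for the incidence graphs through the width-$k$ interpretation into a $C^{2k}$-sentence for $G$ and $H$, apply the Cai--Fürer--Immerman correspondence between $C^{r+1}$-equivalence and $\WL r$-equivalence in both directions to get the excess bound $2(k-1)$, and conclude via Theorem~\ref{thm:set-pack}. Your added remark about handling quotient universes and counting quantifiers in the translation is a sensible precaution but does not change the argument.
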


\begin{remark}
  In order to obtain \WL1-invariance via Theorem~\ref{thm:set-pack},
  we need definability with zero excess. Applying Corollary
  \ref{cor:interpr} for this purpose would require a first-order
  interpretation of width 1, which may not always be
  possible. However, this is not the only way to get zero excess. 

  As an example (in a slightly general setting), consider
  $LP(\one_n,A^2,\one_n,opt)$ where $A$ is the adjacency matrix of
  $G$. As easily seen, if $G\eqkkwl1 H$, then there is a doubly
  stochastic $X$ such that $A^2X=AXB=XB^2$, and also $B^2X^t=X^tA^2$
  by taking transpose. Therefore, the value of
  $LP(\one_n,A^2,\one_n,opt)$ is \WL1-invariant.\footnote{Indeed, as
    observed by Rubalcaba~\cite{Rubalcaba-th}, this holds for any
    polynomial in $A$} The logical route to show \WL1-invariance via
  Corollary \ref{cor:interpr} is not feasible because the $(i,j)^{th}$
  entry of $A^2$ counts the number of 2-walks between vertices $i$ and
  $j$, which cannot be captured by the logic $C^2$. 

  Another example where a combinatorial argument yields more than
  Corollary \ref{cor:interpr} is presented below.
\end{remark}

\subsection{\WL1-invariance of the fractional matching number}

Recall that a set of edges $M\subseteq E(G)$ is a \emph{matching} in a
graph $G$ if every vertex of $G$ is incident to at most one edge from
$M$. The \emph{matching number} $\nu(G)$ is the maximum size of a
matching in~$G$.  Note that this terminology and notation agrees with
Section \ref{ss:FSP} when graphs are considered hypergraphs with
hyperedges of size 2.  The fractional Matching Problem is defined by
the LP
\begin{eqnarray*}
\max \sum_{uv\in E(G)} x_{uv}&&\text{under} \\
x_{uv}&\ge&0\text{ for every }uv\in E(G),\\
\sum_{v\in N(u)}x_{uv}&\le&1\text{ for every }u\in V(G),
\end{eqnarray*}
whose value is the \emph{fractional matching number} $\nu_f(G)$.  The
above LP is exactly the linear program $LP(\calS_G)$ for the instance
$\calS_G=E(G)$ of Fractional Set Packing formed by the edges of $G$ as
2-element subsets of $V(G)$, that is, $\nu_f(G)=\nu_f(\calS_G)$. A
first-order interpretation of the incidence graph $\calS_G$ in the
input graph $G$ of width 1 is clearly impossible. Moreover, an
interpretation of width 2 can only give \WL3-invariance by
Corollary \ref{cor:interpr}. Nevertheless, we can directly show that
$\calS_G$ is definable over $G$ with zero excess.

\begin{theorem}\label{thm:nu_f}
 The fractional matching number is \WL1-invariant. 
\end{theorem}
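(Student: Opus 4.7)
The plan is to reduce Theorem~\ref{thm:nu_f} to Theorem~\ref{thm:set-pack} by showing that the Set Packing instance $\calS_G = E(G)$ is definable over $G$ with zero excess, i.e., that $G \eqkkwl1 H$ implies $\ig{\calS_G} \eqkkwl1 \ig{\calS_H}$. Then $\nu_f(G) = \nu_f(\calS_G) = \nu_f(\calS_H) = \nu_f(H)$ follows directly. Since the incidence graph of $\calS_G$ is not first-order interpretable of width~$1$ in $G$, Corollary~\ref{cor:interpr} would only yield $\WL3$-invariance, so the reduction has to be established by a direct combinatorial argument.

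I will exploit the characterization of $\WL1$-equivalence via common equitable partitions. Assume that $G$ and $H$ admit a common equitable partition into classes $V_1,\dots,V_s$ with matching sizes $|V_i^G| = |V_i^H|$ and matching inter-class degrees $d_{ij}$. I extend this vertex partition to the incidence graphs by keeping $V_1,\dots,V_s$ on the blue side and, on the red side, grouping edges by their endpoint color pair: for $1 \le i \le j \le s$, let $E_{ij}$ be the set of edges having one endpoint in $V_i$ and one in $V_j$ (both endpoints in $V_i$ if $i=j$).

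The core verification is that this extended partition is equitable in each incidence graph and has identical parameters on both sides. A blue vertex $v \in V_i$ has exactly $d_{ij}$ neighbors in $E_{ij}$ (and $d_{ii}$ in $E_{ii}$) by the equitability of the original partition; a red vertex $e \in E_{ij}$ trivially has one neighbor in $V_i$ and one in $V_j$, with the case $i=j$ contributing two neighbors in $V_i$. The class sizes also agree across $G$ and $H$, since for $i<j$ we have $|E_{ij}| = |V_i|\,d_{ij}$ and $|E_{ii}| = |V_i|\,d_{ii}/2$, both determined by the common data. Hence the extended partitions form a common equitable partition of the red/blue-colored graphs $\ig{\calS_G}$ and $\ig{\calS_H}$, yielding $\ig{\calS_G} \eqkkwl1 \ig{\calS_H}$, and Theorem~\ref{thm:set-pack} then delivers $\nu_f(G) = \nu_f(H)$.

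The mild obstacle to watch for is the consistency of the edge-class sizes between $G$ and $H$, which hinges on having a genuinely \emph{common} equitable vertex partition rather than two unrelated ones; the counting itself is just double counting, but the appeal to commonality is what makes the edge-level partition coherent across the two graphs and thus enables the zero-excess conclusion.
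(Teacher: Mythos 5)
Your proposal is correct and follows essentially the same route as the paper: both extend a common equitable vertex partition of $G$ and $H$ to the incidence graphs by classifying edges according to the pair of colors of their endpoints, verify equitability of the extended partition together with the agreement of the edge-class sizes (via the same double-counting $|E_{ij}|=|V_i|\,d_{ij}$, resp.\ $|E_{ii}|=|V_i|\,d_{ii}/2$), and then invoke Theorem~\ref{thm:set-pack}. The only cosmetic difference is that you swap the red/blue labels relative to the paper's convention for incidence graphs.
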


\begin{proof}
Given $G\eqkkwl1H$, we have to prove that $\nu_f(G)=\nu_f(H)$ or, equivalently,
$\nu_f(\calS_G)=\nu_f(\calS_H)$ where $\calS_G$ is as defined above.
By Theorem \ref{thm:set-pack}, it suffices to show that
$\ig{\calS_G}\eqkkwl1\ig{\calS_H}$. 
To this end, we construct a common equitable partition of $\ig{\calS_G}$ and $\ig{\calS_H}$,
appropriately identifying their vertex sets. Recall that
$V(\ig{\calS_G})=V(G)\cup E(G)$ and a red vertex $x\in V(G)$ is adjacent to a blue
vertex $e\in E(G)$ if $x\in e$.

For $x\in V(G)$, let $c_G(x)=\algstabbb1(G,x)$
and define $c_H$ on $V(H)$ similarly. First, we identify $V(G)$ and $V(H)$
(i.e., the red parts of the two incidence graphs) so that $c_G(x)=c_H(x)$
for every $x$ in $V(G)=V(H)$, which is possible because \WL1-equivalent graphs
have the same color palette after color refinement.
The color classes of $c_G$ now form a common equitable partition of $G$ and $H$.

Next, extend the coloring $c_G$ to $E(G)$ (the blue part of $\ig{\calS_G}$) 
by $c_G(\{x,y\})=\{c_G(x),c_G(y)\}$, and similarly extend $c_H$ to $E(H)$. 
Denote the color class of $c_G$ containing $\{x,y\}$ by $C_G(\{x,y\})$,
the color class containing $x$ by $C_G(x)$ etc. Note that $|C_G(\{x,y\})|$
is equal to the number of edges in $G$ between $C_G(x)$ and $C_G(y)$
(or the number of edges within $C_G(x)$ if $c_G(x)=c_G(y)$).
Since $\{C_G(x)\}_{x\in V(G)}$ is a common equitable partition of $G$ and $H$,
we have $|C_G(\{x,y\})|=|C_H(\{x',y'\})|$ whenever $c_G(\{x,y\})=c_H(\{x',y'\})$
(note that for any edge $\{x,y\}$ of $G$ there is an edge $\{x',y'\}$ 
of the same color in $H$ and vice versa).
This allows us to identify $E(G)$ and $E(H)$ so that $c_G(e)=c_H(e)$ for every $e$ in $E(G)=E(H)$.

Now, consider the partition of $V(G)\cup E(G)$ into the color classes of $c_G$ 
(or the same in terms of $H$) and verify that this is a common equitable
partition of $\ig{\calS_G}$ and $\ig{\calS_H}$. Indeed, let 
$C$ and $D$ be color classes of $c_G$ (hence, also of $c_H$).
It is enough to consider the case $C\subseteq V(G)$ and $D\subseteq E(G)$ 
because otherwise $C$ and $D$ are in the same part of the incidence graph
and there is no edge in between.
Consider an arbitrary vertex $x\in C$ and an arbitrary edge $e\in D$.
Let $e=\{u,v\}$. If neither $u$ nor $v$ is in $C$, then there is no edge
between $C$ and $D$ in both $\ig{\calS_G}$ and $\ig{\calS_H}$.
Otherwise, suppose that $u\in C$ and denote $C'=C_G(v)$ (it is not excluded that $C'=C$). 
Clearly, the vertex $x$ has exactly as many $D$-neighbors in $\ig{\calS_G}$
as it has $C'$-neighbors in $G$. This number depends only on $C$ and $C'$
or, equivalently, only on $C$ and $D$, and is the same if counted for~$\ig{\calS_H}$.

On the other hand, $e$ has exactly one $C$-neighbor, $u$, in $\ig{\calS_G}$ and $\ig{\calS_H}$ 
if $C'\ne C$ and exactly two $C$-neighbors, $u$ and $v$, if $C'=C$. 
What is the case depends only on $D$ and $C$, and is the same in $\ig{\calS_G}$ and $\ig{\calS_H}$.
Thus, we do have a common equitable partition of $\ig{\calS_G}$ and~$\ig{\calS_H}$.
\end{proof}

The fractional matching number is precisely the fractional
$K_2$-packing number, and we generalize Theorem \ref{thm:nu_f} to
fractional $F$-packing numbers in Section~\ref{sec:pack}.  In
particular, there we will establish \WL2-invariance of Fractional
Triangle Packing. The approach we used in
the proof of Theorem \ref{thm:nu_f} works as well for \emph{edge-disjoint}
packing, which we demonstrate in the next subsection.

\subsection{\WL2-invariance of Fractional Edge-Disjoint Triangle Packing}\label{ss:EDT}

Given a graph $G$, let $T(G)$ denote the family of all sets $\{e_1,e_2,e_3\}$
consisting of the edges of a triangle subgraph in $G$.
We regard $T(G)$ as a family $\calS_G$ of subsets of the edge set $E(G)$.
The optimum value of Set Packing Problem on $\calS_G$, which we denote by $\rho^{K_3}(G)$,
is equal to the maximum number of edge-disjoint triangles in $G$.
Let $\rho^{K_3}_f(G)=\nu_f(\calS_G)$ be the corresponding fractional parameter.

\begin{theorem}\label{thm:rho_f}
 The fractional edge-disjoint triangle packing number $\rho^{K_3}_f$ is \WL2-invariant. 
\end{theorem}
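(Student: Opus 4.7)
The plan is to follow the strategy of Theorem \ref{thm:nu_f}, invoking \WL 2-equivalence in place of \WL 1. By Theorem \ref{thm:set-pack}, it suffices to show that $G \eqkkwl 2 H$ implies $\ig{\calS_G} \eqkkwl 1 \ig{\calS_H}$, and I would do so by exhibiting a common equitable partition of the two incidence graphs.

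Using the stable \WL 2 coloring on $V(G)^2$, define colors on $E(G) \cup T(G)$ as follows. For an edge $e = \{x,y\}$, set
$c(e) = \Mset{\algstabbb 2(G,(x,y)),\, \algstabbb 2(G,(y,x))}$,
and for a triangle $t = \{x,y,z\}$, set
$c(t) = \Mset{c(\{x,y\}),\, c(\{x,z\}),\, c(\{y,z\})}$;
define analogous colors on $E(H) \cup T(H)$. Using $G \eqkkwl 2 H$, I would identify $E(G)$ with $E(H)$ and $T(G)$ with $T(H)$ so that $c$ is preserved under the identification. The requisite equality of color-class sizes on edges is immediate from the equality of the multisets of \WL 2 pair-colors, while for triangles it follows from the stability argument below.

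To verify that this yields a common equitable partition of $\ig{\calS_G}$ and $\ig{\calS_H}$, fix color classes $A \subseteq E$ and $B \subseteq T$ and check: \emph{(i)} each $t \in B$ contains the same number of $A$-colored edges in both incidence graphs, and \emph{(ii)} each $e \in A$ is contained in the same number of $B$-colored triangles. Item \emph{(i)} is immediate, as this count equals the multiplicity of $A$ in the multiset $c(t)$.

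The main obstacle is item \emph{(ii)}, which rests on \WL 2 stability. By stability, the color $\algstabbb 2(G,(x,y))$ determines the multiset $\Mset{(\algstabbb 2(G,(x,u)),\, \algstabbb 2(G,(u,y))) : u \in V(G)}$. Moreover, because the initial pair-coloring is invariant under the coordinate swap, the involution $(u,v)\mapsto(v,u)$ induces a fixed permutation on stable \WL 2 color classes; hence $\algstabbb 2(G,(u,x))$ and $\algstabbb 2(G,(y,u))$ are determined by $\algstabbb 2(G,(x,u))$ and $\algstabbb 2(G,(u,y))$ respectively. Consequently, for $e = \{x,y\}$, the multiset $\Mset{(c(\{x,u\}),\, c(\{y,u\})) : u \in V(G)}$ is a function of $c(e)$ alone, and with it the number of triangles of color $B$ through $e$. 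The same count arises in $H$ for the matched edge, establishing \emph{(ii)}. An analogous application of the stability argument yields the equality of triangle-color-class sizes needed for the initial identification step, completing the construction.
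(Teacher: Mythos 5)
Your proposal is correct and follows essentially the same route as the paper's proof: the same edge and triangle colorings derived from the stable \WL2 coloring, the same identification of $E(G)$ with $E(H)$ and $T(G)$ with $T(H)$, and the same verification that this gives a common equitable partition of the incidence graphs, with the triangle-through-edge count handled exactly as in the paper via \WL2 stability and the swap-invariance of the stable pair-coloring.
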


\begin{proof}
  Given a graph $G$, we consider the coloring $c_G$ of $E(G)\cup T(G)$
  defined by $c_G(\{x,y\})=\{\algstabbb2(G,x,y),\algstabbb2(G,y,x)\}$
  on $E(G)$ and
  $c_G(\{e_1,e_2,e_3\})=\{\!\!\{ c_G(e_1),\allowbreak c_G(e_2),c_G(e_3) \}\!\!\}$
  on $T(G)$.  As in the proof of Theorem \ref{thm:nu_f}, the upper
  case notation $C_G(s)$ will be used to denote the color class
  containing $s\in E(G)\cup T(G)$.

Suppose that $G\eqkkwl2H$. This condition implies that we can identify
the sets $E(G)$ and $E(H)$ so that $c_G(e)=c_H(e)$ for every $e$ in
$E(G)=E(H)$.  Moreover, the \WL2-equivalence of $G$ and $H$ implies
that for any $t$ in $T(G)$ there is $t'$ of the same color in $T(H)$ and vice versa.
What is more, for any $t\in T(G)$ and $t'\in T(H)$ with
$c_G(t)=c_H(t')$ we have $|C_G(t)|=|C_H(t')|$. This allows us to identify $T(G)$ and $T(H)$ so
that $c_G(t)=c_H(t)$ for every $t$ in $T(G)=T(H)$. As in the proof of
Theorem \ref{thm:nu_f}, it suffices to argue that $\{C_G(w)\}_{w\in
  E(G)\cup T(G)}$ is a common equitable partition of the incidence
graphs $\ig{\calS_G}$ and $\ig{\calS_H}$. The equality
$\rho^{K_3}_f(G)=\rho^{K_3}_f(H)$ will then follow by
Theorem~\ref{thm:set-pack}.

Let $C\subseteq E(G)$ and $D\subseteq T(G)$ be color classes of $c_G$ (hence, also of $c_H$).
Consider an arbitrary triangle $t=\{e_1,e_2,e_3\}$ in $D$.
If none of $e_i$ belongs to $C$, then there is no edge
between $C$ and $D$ in both $\ig{\calS_G}$ and $\ig{\calS_H}$.
Suppose that $e_1\in C$ and denote $C'=C_G(e_2)$ and
$C''=C_G(e_3)$ (it is not excluded that some of the classes $C$, $C'$,
and $C''$ coincide).

Denote the vertices of $t$ by $u,v,w$ and suppose that $e_1=\{u,v\}$.
Consider an arbitrary edge $e=\{x,y\}$ in $C$. Let us count the number of $D$-neighbors that $e$ has in
$\ig{\calS_G}$, that is, the number of triangles in $G$ with one edge $e$ and 
two other edges $e'\in C'$ and $e''\in C''$.
This number is equal to the number of vertices $z$ such that
$(\algstabbb2(G,x,z),\algstabbb2(G,z,y))$ is one of the 
$8$ pairs in $(c_G(\Set{u,w})\times c_G(\Set{v,w}))\cup
(c_G(\Set{v,w})\times c_G(\Set{u,w}))$, like
$(\algstabbb2(G,w,v),\algstabbb2(G,u,w))$ (some of these pairs can coincide).
Since the partition of $V(G)^2$ by the coloring
$\algstabbb2(G,\cdot,\cdot)$ is not further refined by \WL2, this
number does not depend on the choice of $e$ in $C$, depending only on $C$ and $D$.  We obtain the same number also
while counting the $D$-neighbors of $e$ in~$\ig{\calS_H}$.

On the other hand, $t$ has exactly one neighbor $e_1$ in $C$ if $C$
differs from both $C'$ and $C''$, exactly two $C$-neighbors if $C$
coincides with exactly one of $C'$ and $C''$, and exactly three
$C$-neighbors $e_1$, $e_2$, and $e_3$ if $C=C'=C''$.  Which of the three
possibilities occurs depends only on $D$ and $C$, and is the same in
$\ig{\calS_G}$ and $\ig{\calS_H}$.  This completes our analysis, showing
that we really have a common equitable partition of $\ig{\calS_G}$ and $\ig{\calS_H}$.  
\end{proof}

Note that \WL2-invariance in Theorem \ref{thm:rho_f} is optimal. Indeed, 
for the \WL1-equivalent graphs $2C_3$ and $C_6$ we have $\rho^{K_3}_f(2C_3)=2$ while
$\rho^{K_3}_f(C_6)=0$.

\section{The $F$-packing number}\label{sec:pack}

For graphs $F$ and $G$, let $\Sub FG$ denote the set of all
  subgraphs $S$ of $G$ that are isomorphic to $F$. An 
  \emph{$F$-packing} of $G$ is a set
  $P\subseteq\Sub FG$ where all subgraphs are vertex disjoint.
  The \emph{$F$-packing number} $\pi^F(G)$ is the
  maximum size of an $F$-packing $P$ of $G$. 
Let $\calS_{F,G}=\Set{V(S)}{S\in\Sub FG}$. Note that $\pi^F(G)=\nu(\calS_{F,G})$,
the matching number of the hypergraph $\calS_{F,G}$.
The \emph{fractional $F$-packing number} of $G$ is defined by $\pi_f^F=\nu_f(\calS_{F,G})$,
where $\nu_f$ is the fractional matching number of a hypergraph as introduced in Section~\ref{ss:FSP}.

The following parameter plays a key role in our approach to estimating the WL dimension of~$\pi^F_f$.
  We define the \emph{homomorphism-hereditary treewidth} of a graph $F$,
  denoted by $\htw F$, as the maximum treewidth $\tw{F'}$ over all
  homomorphic images $F'$ of~$F$, i.e., over all $F'$ such that there
  is a homomorphism $h$ from $F$ to $F'$ that is vertex and edge
  surjective. 

For a colored graph $S$, let $S^\circ$ denote the underlying uncolored version of~$S$.

We begin with an important technical lemma.  In what follows, $F$ is
an uncolored graph and $G$ is a graph endowed with a coloring $c\function{V(G)}{\Set{1,\dots,r}}$.  
For a subgraph $S$ of $G$, we define its
\emph{color type} as the multiset $\mu(S)=\Mset{c(u)}{u\in V(S)}$.
For a given color type $\mu$, we write $\eSubb \mu FG$ to denote the
set of all subgraphs $S$ of $G$ such that $S^\circ$ is isomorphic to $F$ and
$\mu(S)=\mu$.  The number of subgraphs $S\in\eSubb \mu FG$ containing
a vertex $x$ is denoted by $\esub x\mu FG$.

Given $x\in V(G)$, we set $\algkstab Gx=\algkstab G\barx$ where
$\barx$ is the $k$-tuple whose all elements are equal to~$x$.  
Dvořák \cite{Dvorak10} proves that, for each graph $F$ of treewidth $k$, 
the number of homomorphisms from $F$ to a graph $G$ is \kWL-invariant.
In particular \cite[Lemma 4]{Dvorak10}, if $\algkstab Gx=\algkstab Hy$
and $F$ is a graph of treewidth at most $k$ with a designated vertex $z$, then
the number of homomorphisms from $F$ to $G$ taking $z$ to $x$
is equal to the number of homomorphisms from $F$ to $H$ taking $z$ to $y$.
The classical result by Lov{\'{a}}sz \cite[Section 5.2.3] {hombook}
shows the number of subgraphs of a graph $G$ isomorphic to $F$
is determined by the numbers of homomorphisms from $F'$ to $G$,
where $F'$ ranges over homomorphic images of $F$.
Combined with Dvořák's result, this implies the following fact
(see \cite{shades} for some details).

\begin{lemma}\label{lem:uniform}
  If $\htw F\leq k$, then the count $\esub x\mu FG$ is determined by
  the quadruple $F$, $\mu$, $\algkstab Gx$, and $\algkstabb G$.
\end{lemma}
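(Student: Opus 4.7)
The plan is to write $\esub x\mu FG$ as a $\mathbb{Q}$-linear combination, with coefficients depending only on $F$ and $\mu$, of pinned color-preserving homomorphism counts from homomorphic images of $F$ into the colored graph $G$, and then invoke Dvořák's lemma term by term. Since every homomorphic image of $F$ has treewidth at most $\htw F\le k$ by hypothesis, Dvořák's treewidth hypothesis will be met for each pattern appearing in the combination.

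First, I reduce from subgraphs to injective pinned homomorphism counts. For a coloring $\sigma\function{V(F)}{\Set{1,\dots,r}}$ whose color multiset equals $\mu$ and a vertex $z\in V(F)$ with $\sigma(z)=c(x)$, let $\mathrm{Inj}^\sigma_{z\to x}(F,G)$ be the number of injective homomorphisms $\varphi\function FG$ with $\varphi(z)=x$ and $c\circ\varphi=\sigma$. Each $S\in\eSubb\mu FG$ containing $x$ arises as the image of exactly $|\mathrm{Aut}(F)|$ such triples $(\sigma,z,\varphi)$, so
\[
|\mathrm{Aut}(F)|\cdot\esub x\mu FG=\sum_{\sigma,z}\mathrm{Inj}^\sigma_{z\to x}(F,G).
\]
Applying Lovász's Möbius inversion on the partition lattice of $V(F)$, restricted to partitions refining the $\sigma$-color-class partition, I rewrite each injective count as a signed sum
\[
\mathrm{Inj}^\sigma_{z\to x}(F,G)=\sum_{\Pi}m(\Pi)\cdot\mathrm{Hom}^{\sigma_\Pi}_{\bar z\to x}(F/\Pi,G),
\]
where $F/\Pi$ is the quotient (a homomorphic image of $F$), $m(\Pi)$ is the Möbius coefficient, $\sigma_\Pi$ is the induced coloring on $F/\Pi$, $\bar z$ is the $\Pi$-class of $z$, and $\mathrm{Hom}^{\sigma_\Pi}_{\bar z\to x}(F/\Pi,G)$ counts all homomorphisms $\psi\function{F/\Pi}G$ satisfying $\psi(\bar z)=x$ and $c\circ\psi=\sigma_\Pi$.

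Since $\tw{F/\Pi}\le\htw F\le k$ for every quotient $F/\Pi$, and since the vertex coloring $c$ is already encoded in the initial WL coloring of $k$-tuples in $G$, Dvořák's lemma as cited in the excerpt extends to the pinned, colored setting and shows that each term $\mathrm{Hom}^{\sigma_\Pi}_{\bar z\to x}(F/\Pi,G)$ is determined by $F/\Pi$, $\sigma_\Pi$, $\algkstab Gx$, and $\algkstabb G$. Combining the three displays above expresses $\esub x\mu FG$ as a function of $F$, $\mu$, $\algkstab Gx$, and $\algkstabb G$ alone. The main point requiring care is the color bookkeeping: one must verify that the color-type constraint $\mu$ propagates cleanly through Lovász's formula (restricting to $\sigma$-refining partitions so that $\sigma_\Pi$ is well-defined and the coefficients $m(\Pi)$ depend only on $F$ and $\mu$) and that Dvořák's statement, phrased in the excerpt for uncolored graphs, extends to the colored case — neither step is deep once one observes that colors are handled natively by the WL initialization.
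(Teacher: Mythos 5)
Your argument is exactly the route the paper takes (and defers to \cite{shades} for): convert the subgraph count to injective pinned homomorphism counts via $|\mathrm{Aut}(F)|$, apply Lov\'asz's M\"obius inversion over partitions to pass to homomorphism counts into quotients $F/\Pi$, each of treewidth at most $\htw F\le k$, and invoke Dvo\v{r}\'ak's pinned-homomorphism lemma term by term. The color bookkeeping you spell out (restricting to $\sigma$-refining partitions, and noting that \kWL{} on colored graphs already absorbs $c$ into the initial coloring) is precisely the routine verification the paper omits, so the proposal is correct and essentially identical to the intended proof.
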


\noindent
Note that, if $k\ge2$, then $\algkstabb G\ne\algkstabb H$ even implies
that $\algkstabb G\cap\algkstabb H=\emptyset$. This means that the
stable color $\algkstab Gx$ of any single vertex $x$ determines the
color palette $\algkstabb G$. Thus, for $k\ge2$, Lemma
\ref{lem:uniform} says that $\esub x\mu FG$ is determined by the
triple $F$, $\mu$, and $\algkstab Gx$.

Denote the incidence graph of the hypergraph $\calS_{F,G^\circ}$ by $G^F$.
Recall that $G^F$ is a colored graph with two color classes $V(G)$ (red) and $\Sub F{G^\circ}$ (blue)
where a red vertex $x$ is adjacent to a blue vertex $S$ if $x\in V(S)$.

\begin{lemma}\label{lem:FextA}
  Let $G$ be a graph with vertex coloring $c$ and $\Pa=\Set{C_1,\dots,C_{r}}$ 
be the corresponding color partition of $V(G)$.
For a graph $F$, suppose that the counts
  $\esub x\mu FG$ only depend on $(c(x),\mu,F)$. Then the partition $\Pa$
  can be extended to an equitable partition
  $\Pa^F=\Set{C_1,\dots,C_{r+s}}$ of $G^F$.
Specifically, if $\mu_1,\dots,\mu_s$ are all color types $\mu$ with $\eSubb \mu FG\ne\emptyset$,
then $C_{r+i}=\Set{S^\circ}{S\in\eSubb {\mu_i}FG}$ for $i=1,\dots,s$.
\end{lemma}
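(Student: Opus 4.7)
The plan is to verify directly that the candidate partition $\mathcal{P}^F$ is an equitable partition of the incidence graph $G^F$. First I would observe that $\mathcal{P}^F$ really is a partition extending $\mathcal{P}$: the classes $C_1,\dots,C_r$ cover the red side $V(G)$ by hypothesis on $\mathcal{P}$, and the new classes $C_{r+1},\dots,C_{r+s}$ partition the blue side $\Sub F{G^\circ}$, since each $S^\circ$ belongs to exactly one $\eSubb{\mu_i}FG$, namely the one indexed by its color type $\mu(S)$. Because $G^F$ is bipartite, all intra-side neighborhood counts vanish automatically, so the only task is to check neighborhood counts between a red class $C_i$ and a blue class $C_{r+j}$, in both directions.

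For the red-to-blue direction I would fix $x\in C_i$ and $j\in\{1,\dots,s\}$ and note that the neighbors of $x$ in $C_{r+j}$ are exactly the subgraphs $S\in\eSubb{\mu_j}FG$ containing $x$, so their number is $\esub x{\mu_j}FG$. The hypothesis of the lemma states that this count depends only on the triple $(c(x),\mu_j,F)$, and since $c$ is constant on $C_i$, it is independent of the particular choice of $x\in C_i$.

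For the blue-to-red direction I would fix $S\in\eSubb{\mu_j}FG$ (so $S^\circ\in C_{r+j}$) and $i\in\{1,\dots,r\}$. The red neighbors of $S^\circ$ in $G^F$ are precisely the vertices of $V(S)$, and the number of them lying in $C_i$ equals the multiplicity of the color $i$ in the multiset $\mu(S)=\mu_j$. This depends only on $i$ and $j$, not on the specific $S$, so every blue vertex of $C_{r+j}$ has the same number of red neighbors in $C_i$.

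No step here is genuinely delicate: the lemma is essentially a bookkeeping statement translating the uniformity of $\esub x\mu FG$ across a red color class into equitability of the candidate partition on the incidence graph. The only point that warrants care is to treat $\mu(S)$ as a multiset, so that its multiplicities correctly count the vertices of $V(S)$ lying in each $C_i$; the rest is definition-chasing.
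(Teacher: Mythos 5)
Your proof is correct and follows essentially the same route as the paper's: both handle the bipartite structure by dismissing same-side pairs, use the uniformity of $\esub x\mu FG$ for the red-to-blue counts, and read off the blue-to-red counts from the multiplicities in the color type $\mu_j$. Nothing further is needed.
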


\begin{proof}
For each pair $x,y$ of vertices in each $C_i$, we need to show that
$x$ and $y$ have the same number of neighbors in each $C_j$. If
$i,j\le r$ or $i,j > r$, then between $C_i$ and $C_j$
there is no edge at all.  If $i\leq r$ and $j>r$, then
$\esub x{\mu_{j-r}}FG=\esub y{\mu_{j-r}}FG$ as $c(x)=c(y)=i$.
Since $\esub x{\mu_{j-r}} FG$ is exactly the number of neighbors of
$x$ in $C_j$, the claim follows also in this case.  Finally, for $i>r$
and $j\leq r$ the color type $\mu_{i-r}$ contains the color
corresponding to $C_j$ with a certain multiplicity $m_j$ which for any
vertex $x\in C_i$ coincides with the number of its neighbors in $C_j$.
\end{proof}

\begin{theorem}\label{thm:htwpack}
  If $\htw F\leq k$, then $\pi_f^F$ is \WL k-invariant.
\end{theorem}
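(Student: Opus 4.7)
The plan is to follow the template set by Theorems \ref{thm:nu_f} and \ref{thm:rho_f}: reduce the claim to Theorem \ref{thm:set-pack} by exhibiting, whenever $G \eqkwl H$, a common equitable partition of the incidence graphs $G^F$ and $H^F$, which certifies $G^F \eqkkwl 1 H^F$. The partition will be the one produced by Lemma \ref{lem:FextA}, with the red side $V(G)$ partitioned by the stable \kWL-coloring $c_G(x) = \algkstab G x$ and the blue side $\Sub F {G^\circ}$ partitioned by color type $\mu$.

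First I would identify $V(G)$ with $V(H)$ so that $c_G = c_H$ pointwise, which is possible because $G \eqkwl H$ forces $\algkstabb G = \algkstabb H$ as multisets of stable colors. Under this identification, the coloring $c_G=c_H$ induces a common partition $\Pa = \Set{C_1, \dots, C_r}$ of $V(G) = V(H)$. Since $\htw F \leq k$, Lemma \ref{lem:uniform} applies: $\esub x \mu F G$ is determined by $(F, \mu, \algkstab G x, \algkstabb G)$, and since all of these data coincide for $G$ and $H$ at every vertex $x$, we get $\esub x \mu F G = \esub x \mu F H$; moreover this count depends on $x$ only through $c_G(x)$. Hence the hypothesis of Lemma \ref{lem:FextA} is satisfied for both graphs simultaneously, with matching values.

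It remains to identify $\Sub F {G^\circ}$ with $\Sub F {H^\circ}$ so that the color-type strata agree, which requires $|\eSubb \mu F G| = |\eSubb \mu F H|$ for every color type $\mu$ with $\eSubb \mu F G \ne \emptyset$. This I would obtain from the double-counting identity
\[
m_\mu(i) \cdot |\eSubb \mu F G| \;=\; \sum_{x \in C_i} \esub x \mu F G,
\]
valid for any $i$ with positive multiplicity $m_\mu(i)$ of color $i$ in $\mu$: the right-hand side depends only on $|C_i|$ and the per-color-class values of $\esub x \mu F G$, all of which coincide between $G$ and $H$ by the previous paragraph. After this identification of the blue sides, applying Lemma \ref{lem:FextA} to both $G^F$ and $H^F$ produces a common equitable partition, whence $G^F \eqkkwl 1 H^F$, and Theorem \ref{thm:set-pack} yields $\pi_f^F(G) = \nu_f(\calS_{F, G^\circ}) = \nu_f(\calS_{F, H^\circ}) = \pi_f^F(H)$. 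The main obstacle is the synchronization of identifications across the two sides of the incidence graph so that neighborhoods really match class by class; once the cardinality identity above is recorded, this reduces to the bookkeeping already performed in the proofs of Theorems \ref{thm:nu_f} and \ref{thm:rho_f}, now carried out uniformly over all color types of $F$-subgraphs.
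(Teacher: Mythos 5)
Your proposal is correct and follows essentially the same route as the paper's proof: identify $V(G)$ with $V(H)$ via the stable \kWL{} colors, invoke Lemma~\ref{lem:uniform} and Lemma~\ref{lem:FextA}, match the blue color classes by cardinality, and conclude via a common equitable partition and Theorem~\ref{thm:set-pack}. Your explicit double-counting identity $m_\mu(i)\cdot|\eSubb{\mu}{F}{G}|=\sum_{x\in C_i}\esub{x}{\mu}{F}{G}$ is just a transparent restatement of the paper's argument that the uniform neighbor counts on both sides of $G^F$ force $|C_{r+i}|=|C'_{r+i}|$.
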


\begin{proof}
  Let $G$ and $H$ be two
  $\kWL$-equivalent graphs. Color the vertices of $G$ and $H$
  with their stable $\kWL$ vertex colors, that is, assign each vertex $x\in V(G)$
color $c(x)=\algkstab Gx$ and each vertex $x\in V(H)$
color $c(x)=\algkstab Hx$. Since $\algkstabb G=\algkstabb H$, we can identify the vertex sets
$V(G)$ and $V(H)$ so that each vertex $x$ has the same color $c(x)$ in both graphs.
Let $\Pa=\Set{C_1,\dots,C_{r}}$ be the corresponding color partition of $V(G)=V(H)$. 

Moreover, it follows
  from Lemma~\ref{lem:uniform} that the counts $\esub x\mu FG$ and
  $\esub x\mu{\allowbreak F}H$ coincide and are uniform within each vertex class $C_i$. 
This implies, in particular, that the color types $\mu_1,\dots,\mu_s$ of subgraphs $S$
with $S^\circ\cong F$ appearing in $G$ and in $H$ are the same.
Let $\Pa^F_G=\Set{C_1,\dots,C_{r+s}}$ and $\Pa^F_H=\Set{C'_1,\dots,C'_{r+s}}$ 
be the equitable partitions of $G^F$ and $H^F$ given by Lemma~\ref{lem:FextA}.
Recall that $x$ has $\esub x{\mu_i} FG=\esub x{\mu_i} FH$ neighbors in $C_{r+i}$ in the graph $G^F$
and the same number of neighbors in $C'_{r+i}$ in the graph $H^F$.
On the other hand, any two subgraphs $S\in C_{r+i}$ and $S'\in C'_{r+i}$ have the same
color type $\mu_i$ and, therefore, equally many neighbors in each $C_j$, $j\le r$,
in $G^F$ and in $H^F$. We conclude from here that $|C_{r+i}|=|C'_{r+i}|$ for every $i\le s$.
Therefore, $\Sub FG$ and $\Sub FH$ can be identified so that $C_{r+i}=C'_{r+i}$ for every $i\le s$,
and $\Pa^F=\Pa^F_G=\Pa^F_H$ becomes a common equitable partition of $G^F$ and~$H^F$. 

The existence of a common equitable partition implies that $G^F$ and $H^F$
are $\WL1$-equivalent. Applying Theorem \ref{thm:set-pack},
we conclude that $\pi_f^F(G)=\pi_f^F(H)$.
\end{proof}

\begin{remark}\label{rem:edge-disj}
  If subgraphs in $P\subseteq\Sub FG$ are allowed to share vertices
  but required to be edge disjoint, we call $P$ an \emph{edge disjoint
    $F$-packing} of $G$.  The \emph{edge disjoint $F$-packing number}
  $\rho^F$ is defined as the maximum size of such $P$. Redefine
  $\calS_{F,G}$ by $\calS_{F,G}=\Set{E(S)}{S\in\Sub FG}$, which now
  becomes a hypergraph on the set $E(G)$. Now,
  $\rho^F(G)=\nu(\calS_{F,G})$, and we also define the
  \emph{fractional edge-disjoint $F$-packing number} of $G$ by
  $\rho_f^F=\nu_f(\calS_{F,G})$. Similarly to Theorem
  \ref{thm:htwpack}, if $\htw F\leq k$, then $\rho_f^F$ is \WL
  k-invariant.  Basically the same argument works out, and we here
  only comment on some proof details (cf.\ also the proof of
  Theorem~\ref{thm:rho_f}).

  We set $\algkstab G{x,y}=\algkstab G{\bar z}$ where the pair $(x,y)$
  is extended to the $k$-tuple $\bar z=(x,y,\ldots,y)$ if $k>2$.  If
  $k=1$, we define $\alg 1{}G{x,y}=(\alg 1{}Gx,$ $\alg 1{}Gy)$.  For
  an edge $e=\{x,y\}$ of $G$, denote
$$
c(e)=\{\algkstab G{x,y},\algkstab G{y,x}\}.
$$
For a subgraph $S$ of $G$, we adapt the notion of the color type of $S$, now defining it
by $\mu(S)=\Mset{c(e)}{e\in E(S)}$.

Let $\esub e\mu FG$ denote the number of 
subgraphs $S\in\Sub FG$ having color type $\mu$ and containing an edge $e\in E(G)$.
An analog of Lemma \ref{lem:uniform} says that, for a pair of adjacent vertices $x$ and $y$,
  the count $\esub {\{x,y\}}\mu FG$ is determined by 
  the tuple $(\algkstab G{x,y},\mu,F)$ (note that $\algkstab G{x,y}$
determines $\algkstab G{y,x}$ and vice versa).
Lemma \ref{lem:FextA} has an analog for the edge coloring $c$
introduced above.
\end{remark}

\section{Invariance ratio and integrality gap}\label{s:inv}

\subsection{Edge-disjoint triangle packing: Invariance ratio}

We now turn back to the edge-disjoint triangle packing number $\rho^{K_3}$
and its fractional version $\rho^{K_3}_f$, defined in Section \ref{ss:EDT}.
Let
$$
\mathit{IG}^{K_3}=\sup_{G}\frac{\rho^{K_3}_f(G)}{\rho^{K_3}(G)},
$$
where the supremum is taken over all graphs containing at least one triangle,
be the integrality gap of $\rho^{K_3}_f$.
We define the \emph{invariance ratio} of $\rho^{K_3}$ by
$$
\mathit{IR}^{K_3}=\sup_{G,\,H}\left|\frac{\rho^{K_3}(G)}{\rho^{K_3}(H)}\right|
$$
where the supremum is taken over all pairs of \WL2-equivalent graphs $G$
and $H$ containing at least one triangle. 
If $G$ and $H$ are such graphs, then, by Theorem~\ref{thm:rho_f},
$$
\rho^{K_3}(G)\le\rho^{K_3}_f(G)=\rho^{K_3}_f(H)\le\mathit{IG}^{K_3}\cdot\rho^{K_3}(H),
$$
which implies the relationship
$$
\mathit{IG}^{K_3}\ge\mathit{IR}^{K_3}.
$$

\begin{theorem}\label{thm:ratio}
$\mathit{IG}^{K_3}=\mathit{IR}^{K_3}=2$.
\end{theorem}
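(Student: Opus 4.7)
The starting point is the chain $\mathit{IG}^{K_3}\ge\mathit{IR}^{K_3}$ established just above the theorem statement. It therefore suffices to show (a) the upper bound $\mathit{IG}^{K_3}\le 2$ and (b) the matching lower bound $\mathit{IR}^{K_3}\ge 2$; together these pinch both quantities to exactly~$2$.

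For part~(a), the plan is to appeal to Füredi's estimate \cite{Fueredi81} on fractional matchings in $3$-uniform hypergraphs. The hypergraph $\calS_{K_3,G}$ from Section~\ref{ss:EDT}, whose vertices are the edges of $G$ and whose hyperedges are the triangle edge-sets of $G$, is $3$-uniform and \emph{linear}: two distinct triangles share at most one edge of $G$, for otherwise they would share three vertices and coincide. Feeding this into the Füredi-type bound yields $\nu_f(\calS_{K_3,G})\le 2\,\nu(\calS_{K_3,G})$, i.e., $\rho^{K_3}_f(G)\le 2\,\rho^{K_3}(G)$, for every graph $G$ containing a triangle; taking the supremum gives $\mathit{IG}^{K_3}\le 2$.

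For part~(b), I would exhibit an explicit \WL2-equivalent pair $(G,H)$ with $\rho^{K_3}(G)/\rho^{K_3}(H)=2$. A canonical source of \WL2-equivalent pairs is two non-isomorphic strongly regular graphs with identical parameters $(n,k,\lambda,\mu)$: their stable \WL2-coloring depends only on the coherent configuration, which for an SRG is determined by the parameter tuple. The concrete choice would be the Shrikhande graph~$G$ and the rook graph $H=L(K_{4,4})$, both \textrm{SRG}$(16,6,2,2)$. In $H$ the triangles are precisely the $3$-subsets of the four edges of $K_{4,4}$ incident to a common vertex; two such triangles sharing a centre intersect in two edges of $K_{4,4}$ and hence share an edge of $H$, which forces an edge-disjoint packing to take at most one triangle per vertex of $K_{4,4}$, giving $\rho^{K_3}(H)=8$ (and this is attained by such a choice). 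Realizing $G$ as the Cayley graph on $\bZ_4\times\bZ_4$ with connection set $\{\pm(1,0),\pm(0,1),\pm(1,1)\}$, I would verify that the family $\{\{v,v+(1,0),v+(1,1)\}:v\in\bZ_4^2\}$ consists of $16$ pairwise edge-disjoint triangles that together exhaust all $48$ edges of~$G$, so $\rho^{K_3}(G)=16$ and the ratio is~$2$.

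The delicate step is~(a): one has to check that the triangle hypergraph really meets the hypotheses of the Füredi-type bound and that the constant obtained is exactly~$2$, rather than a slightly weaker value that would have to be sharpened by a separate combinatorial argument specific to triangle hypergraphs. Part~(b) then reduces to routine triangle bookkeeping on two well-known SRGs whose triangle structure is completely transparent, plus the standard observation that cospectral SRGs are \WL2-indistinguishable.
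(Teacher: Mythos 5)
Your part (b) is essentially identical to the paper's argument: the Shrikhande graph versus the $4\times4$ rook's graph, both strongly regular with parameters $(16,6,2,2)$ and hence \WL2-equivalent, with $\rho^{K_3}=16$ versus $\rho^{K_3}=8$ by exactly the triangle bookkeeping you describe. That half is fine.

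Part (a), however, has a genuine gap: you invoke Füredi's bound under the hypothesis that the triangle hypergraph $\calS_G$ is \emph{linear} (any two hyperedges meet in at most one vertex), but linearity is not the hypothesis of Füredi's theorem and cannot suffice for the constant $2$. The Fano plane itself is a linear $3$-uniform hypergraph, and it has $\nu=1$ (any two lines meet) while $\nu_f=7/3$ (weight $\tfrac13$ on each line), so its integrality ratio is $7/3>2$. Füredi's actual hypothesis for $r=3$ is that the hypergraph contains \emph{no copy of the Fano plane} among its hyperedges, and this is precisely the combinatorial content that your proposal skips. The paper supplies it as follows: if seven triangles of $G$ formed a Fano plane on seven edges $e_1,\dots,e_7$ of $G$, then the two triangles corresponding to the lines $\{1,2,3\}$ and $\{1,4,5\}$ share the edge $e_1$, which forces $e_6$ (lying on the line $\{1,6,7\}$ through $e_1$ together with the line $\{4,2,6\}$... more precisely, $e_6$ must join the two vertices of those two triangles not incident to $e_1$) to be vertex-disjoint from $e_1$; but $e_1$ and $e_6$ lie on a common line of the Fano plane and so must be two edges of a common triangle of $G$, a contradiction. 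Without some argument of this kind ruling out the Fano configuration, your appeal to Füredi does not go through, and indeed the statement ``every linear $3$-uniform hypergraph satisfies $\nu_f\le 2\nu$'' that your write-up implicitly asserts is false.
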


\begin{proof}
We begin with proving a lower bound $\mathit{IG}^{K_3}\le2$.
In Section \ref{ss:FSP}, we discussed the fractional matching number $\nu_f(\calS)$ 
of a hypergraph~$\calS$. Füredi \cite{Fueredi81} proved tight bounds for
the integrality gap of this parameter. In particular, if a 3-uniform hypergraph $\calS$
does not contain any set of seven hyperedges forming the Fano plane, then
$\nu_f(\calS)/\nu(\calS)\le2$. Recall that a hypergraph is \emph{$r$-uniform}
if each hyperedge consists of $r$ vertices. The \emph{Fano plane} is the 3-uniform
hypergraph with 7 vertices and 7 hyperedges shown in Figure
\ref{fig:fano}(a).
Let $G$ be an arbitrary graph, and $\calS_G$ be the
hypergraph we associated with $G$ in Section \ref{ss:EDT}. 
Recall that $\rho^{K_3}_f(G)=\nu_f(\calS_G)$.
In order to prove the bound $\mathit{IG}^{K_3}\le2$,
it suffices to check that $\calS_G$ does not contain any copy of the Fano plane.

Indeed, assume that $F$ is a copy of the Fano plane in $\calS_G$.
Enumerate the vertices of $F$ as in Figure \ref{fig:fano}(a).
Each vertex $i$ of $F$ is an edge of $G$, which we denote by $e_i$.
Each hyperedge of $F$ consists of the edges of a triangle in $G$. 
Thus, the triangles of $G$ corresponding to the hyperedges
$\{1,2,3\}$ and $\{1,4,5\}$ share an edge $e_1$; see Figure \ref{fig:fano}(b).
The edge $e_6$ must, therefore, connect the two vertices of these
triangles not incident to $e_1$. Consequently, $e_1$ and $e_6$ are not adjacent,
contradicting the fact that they belong to the hyperedge $\{1,6,7\}$.

We complete the proof by showing that
$
\mathit{IR}^{K_3}\ge2.
$
Denote the Shrikhande and the $4\times4$ rook's graphs by
$S$ and $R$ respectively. Both have vertex set
  $\Zset_4\times \Zset_4$, and $(i,j)$ and $(i',j')$ are adjacent in $S$ if ($i=i'$
  and $j'=j+1$) or ($j=j'$ and $i'=i+1$) or
  ($i'=i+1$ and $j'=j+1$), where equality is in $\Zset_4$, while they are adjacent in $R$ if
  $i=i'$ (row 4-clique) or $j=j'$ (column 4-clique).  The Shrikhande graph
is completely decomposable into edge-disjoint triangles
  $\Set{(i,j),(i+1,j),(i+1,j+1)}$ and, hence, $\rho^{K_3}(S)=16$.  
On the other hand, in $R$ the edges of
  each $K_3$ all belong to the same row or column 4-clique, and
  the rest of the edges in this row/column correspond to a star. 
Since a packing can take at most one $K_3$ from each row/column,
we have $\rho^{K_3}(R)=8$. 
It remains to note that both $S$ and $R$ are strongly regular graphs with the same
parameters $(16,6,2,2)$. Therefore, $S\eqkkwl2R$.
\end{proof}

\begin{figure}[t]
\centering
\begin{tikzpicture}[
  lab/.style={draw=none,fill=none},
  every node/.style={circle,draw,black,
  inner sep=2pt,fill=black},
  lab/.style={draw=none,fill=none,inner sep=0pt,rectangle},
  longer/.style={shorten >=-4mm,shorten <=-4mm},
  vt/.style={line width=1.4pt},
  line width=0.2pt,
  outer sep=0pt,
]
  \node[isosceles triangle,dgreen,
    inner sep=6mm,
    isosceles triangle apex angle=60,
    shape border uses incircle,
    shape border rotate=90,vt,
    fill=none] (fanotr) {}
  ;
  \node[at=(fanotr.center),inner sep=6mm,fill=none,vt,dgreen] {};
  \path 
    (fanotr.center) node[label={right:$7$}] (0) {}
    (fanotr.30)     node[label={right:$2$}] (4) {}
    (fanotr.90)     node[label={above:$5$}] (1) {}
    (fanotr.150)    node[label={ left:$1$}] (2) {}
    (fanotr.210)    node[label={ left:$4$}] (6) {}
    (fanotr.270)    node[label={below:$3$}] (5) {}
    (fanotr.330)    node[label={right:$6$}] (3) {}
  ;
  \draw[vt,red] (2) -- (0) -- (3);
  \draw[vt]
    (4) -- (6) -- (0)
    (5) -- (0) -- (1)
  ;
  \path ($(1)+(3.5cm,0cm)$) node (e456) {}
  ++(1.7cm,-3mm) node (e125) {}
  ++(0cm,-1.7cm) node (e236) {}
  ++(-1.7cm,3mm) node (e134) {}
  ;
  \draw[vt] 
  (e125) to node[right=1mm,lab] {$e_2$} (e236)
  (e125) to node[right=2mm,lab] {$e_1$} (e134)
  (e134) to node[below=1mm,lab] {$e_3$} (e236)
  (e134) to node[right=1mm,lab]  {$e_4$} (e456)
  (e125) to node[above=1mm,lab] {$e_5$} (e456)
  (e456) .. %
  controls ($(e134)-(17mm,0mm)$) and  ($(e134)-(0mm,17mm)$) ..
  node[below=2mm,lab] {$e_6$}
  (e236)
  ;
  \path ($(6)-(7mm,4mm)$) node[lab] {(a)}
  ++(4.8cm,0cm) node[lab] {(b)};
\end{tikzpicture}
\caption{(a) Fano plane (b) The only constellation of triangles consistent with
the green hyperedges of the Fano plane. Adding a triangle for the
red hyperedge is impossible.}
\label{fig:fano}
\end{figure}
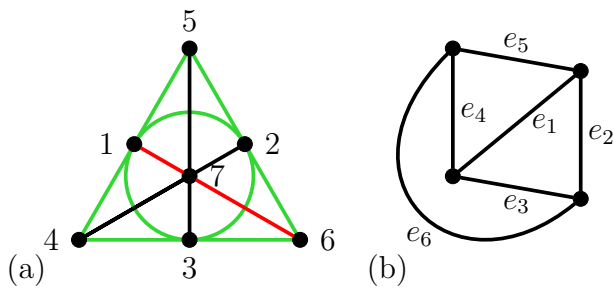

\subsection{Edge-disjoint triangle packing: Invariance difference}\label{ss:inv-diff}

Denote the number of vertices in a graph $G$ by $v(G)$. Let
$$
\mathit{AIG}^{K_3}(n)=\max_{G\,:\;v(G)=n}\of{\rho^{K_3}_f(G)-\rho^{K_3}(G)}
$$
be the \emph{additive integrality gap} of $\rho^{K_3}_f$.
Haxell and R{\"{o}}dl \cite{HaxellR01} proved that 
$$
\mathit{AIG}^{K_3}(n)=o(n^2),
$$
which gives a PTAS for $\rho^{K_3}$ on dense enough graphs.
On the other hand, Yuster \cite{Yuster05} showed that
$\mathit{AIG}^{K_3}(n)=\Omega(n^{1.5})$, and it is open whether this  lower bound is tight.
Define the \emph{invariance difference} of $\rho^{K_3}$ as the function 
$$
\mathit{ID}^{K_3}(n)=\max|\rho^{K_3}(G)-\rho^{K_3}(H)|
$$
where the maximum is taken over \WL2-equivalent $n$-vertex graphs $G$
and $H$. As follows from Theorem \ref{thm:rho_f}, $\mathit{ID}^{K_3}(n)$ provides
a lower bound for the additive integrality gap of $\rho^{K_3}_f$, namely
\begin{equation}
  \label{eq:AIGID}
\mathit{AIG}^{K_3}(n)\ge\frac12\,\mathit{ID}^{K_3}(n).
\end{equation}  
In this respect, it would be interesting to determine the asymptotics
of $\mathit{ID}^{K_3}(n)$ and to investigate how tight the relation \refeq{AIGID} is.
The following result is a step towards this goal.

\begin{theorem}\label{thm:discrep}
 $\mathit{ID}^{K_3}(n)=\Omega(n^{1.25})$. 
\end{theorem}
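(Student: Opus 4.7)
The plan is to exhibit, for infinitely many $n$, a pair of \WL2-equivalent $n$-vertex graphs $G_n$ and $H_n$ with $|\rho^{K_3}(G_n)-\rho^{K_3}(H_n)|\ge c\,n^{5/4}$ for some absolute constant $c>0$. As in the proof of Theorem~\ref{thm:ratio}, the natural source of \WL2-equivalent pairs is strongly regular graphs with identical parameters: on an SRG with parameters $(v,k,\lambda,\mu)$ the 2-WL coloring of an ordered pair stabilizes at the first round, because loops, edges and non-edges are distinguished ab initio and the common-neighbor counts $\lambda,\mu$ are uniform within each class. Consequently, any two SRGs with matching parameters are \WL2-equivalent, and the $(16,6,2,2)$-pair used in Theorem~\ref{thm:ratio} is a small instance of this general principle.

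To move from a constant gap to a gap of order $n^{5/4}$, I would look for an infinite family of SRGs admitting several non-isomorphic members with quantifiably different edge-disjoint triangle packings. A natural candidate is the family of Latin-square graphs $L_3(\Lambda)$ on vertex set $[m]^2$ (so $n=m^2$), which are SRGs with parameters $(m^2,3(m-1),m,6)$ independently of the Latin square $\Lambda$. The row-, column- and value-cliques partition the edge set into $3m$ copies of $K_m$ in every $L_3(\Lambda)$, but the placement of the \emph{transversal} triangles—those using one edge from each parallel class—is controlled by the intercalate structure of $\Lambda$. For $G_n$ I would choose $\Lambda_G$ rich in intercalates, for example the Cayley table of an elementary abelian $2$-group when $m=2^r$, which produces the maximum possible number of intercalates, all pairwise edge-disjoint in $L_3(\Lambda_G)$. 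For $H_n$ I would choose $\Lambda_H$ with few intercalates, for example the Cayley table of a cyclic group of odd order $m$.

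The crux of the proof, and the main technical obstacle, is the quantitative separation $|\rho^{K_3}(G_n)-\rho^{K_3}(H_n)|=\Omega(m^{5/2})=\Omega(n^{5/4})$. For the lower bound on $\rho^{K_3}(G_n)$ my plan is to describe an explicit packing that treats each of the $3m$ cliques $K_m$ by an optimal (or near-optimal) triangle decomposition and then greedily augments it with $\Omega(m^{5/2})$ pairwise edge-disjoint transversal triangles extracted from the intercalates of $\Lambda_G$; a Rödl-nibble style probabilistic argument, or a direct construction exploiting the $\mathbb{Z}_2^r$-structure, should produce enough transversal triangles compatible with the within-clique packing. For the upper bound on $\rho^{K_3}(H_n)$ my plan is a counting argument: every transversal triangle in any edge-disjoint packing of $H_n$ must sit inside an intercalate of $\Lambda_H$, and intercalate-scarcity forces the transversal contribution to be smaller by $\Omega(m^{5/2})$, while the within-clique contributions coincide up to negligible terms. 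Calibrating the exponent $5/4$ requires balancing the intercalate count against the number of $K_m$-edges consumed by transversal triangles, and handling the corner cases $m\not\equiv 1,3\pmod 6$ in which $K_m$ is not Steiner-triple-decomposable; I expect this balancing step to be the most delicate part and possibly to require choosing a more refined family of Latin squares (or a different SRG family altogether) to hit the exponent $5/4$ cleanly.
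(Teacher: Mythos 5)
Your starting point---that strongly regular graphs with equal parameters are \WL2-equivalent, so one should hunt for SRG pairs with very different packing numbers---is sound and is exactly how the paper obtains its base pair. But the Latin-square-graph family you propose cannot deliver the exponent $5/4$, for a structural reason that no choice of Latin squares can circumvent: the edge set of $L_3(\Lambda)$ is partitioned by the $3m$ row-, column- and symbol-cliques, each a $K_m$, and a maximum partial Steiner triple system of $K_m$ leaves only $O(m)$ edges uncovered (exactly $0$ when $m\equiv1,3\pmod 6$, and at most about $m/2+3$ otherwise). Hence \emph{every} graph $L_3(\Lambda)$ of order $n=m^2$ satisfies $e/3-O(m^2)\le\rho^{K_3}(L_3(\Lambda))\le e/3$, using within-clique triangles alone, so the difference between any two members of the family is $O(m^2)=O(n)=o(n^{5/4})$. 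In particular, for $m\equiv1,3\pmod 6$ all Latin square graphs of order $m$ have identical (perfect) packing numbers regardless of intercalate counts, and in the remaining congruence classes the transversal triangles can only recover an $O(n)$ deficit, never create an $n^{5/4}$ one. Your own hedge about needing ``a different SRG family altogether'' is therefore where the real proof has to begin; the intercalate-counting program cannot be calibrated to hit $n^{5/4}$.

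The paper takes a different route: it amplifies the single $(16,6,2,2)$ pair (Shrikhande graph $S$ versus rook's graph $R$) by tensor powers. Two lemmas drive this: \WL2-equivalence is preserved under tensor products (proved via Hella's bijective pebble game, playing the two coordinate games in parallel), and $K_3$-decomposability is preserved under tensor products, so $S^k$ is perfectly decomposable. On the other side, $R^k$ splits into $8^k$ edge-disjoint copies of $(K_4)^k$, every triangle of $R^k$ lies inside one such copy, and each copy has all $4^k$ vertices of odd degree $3^k$, forcing at least $2^{2k-1}$ uncovered edges per copy and hence a total deficit of $2^{5k}/2=\frac12 v(R^k)^{5/4}$. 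If you want to salvage your approach, you would need an SRG family in which the ``bad'' member has no analogous near-perfect local decomposition---the tensor-power trick manufactures exactly that obstruction out of the parity of the degrees of $(K_4)^k$.
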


The proof uses two lemmas, which we state and prove now.
The \emph{tensor product} $G\times G'$ of graphs $G$ and $G'$ is the graph
on the vertex set $V(G)\times V(G')$ with vertices $(u,u')$ and $(v,v')$
adjacent if $u$ and $v$ are adjacent in $G$ and $u'$ and $v'$ are adjacent in~$G'$.

\begin{lemma}\label{lem:tensor-prod}
If $G\eqkkwl2 H$ and $G'\eqkkwl2 H'$, then $G\times G'\eqkkwl2 H\times H'$.
\end{lemma}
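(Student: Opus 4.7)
The plan is to show by induction on $r\ge 0$ that the $r$-th round 2-WL color of an ordered pair in $G\times G'$ is canonically determined by the pair of $r$-th round 2-WL colors of the corresponding projections in $G$ and $G'$. Writing $\barx=((x_1,x_1'),(x_2,x_2'))\in V(G\times G')^2$ and
\[
\Phi_r(\barx)=\bigl(\alg 2rG{(x_1,x_2)},\;\alg 2r{G'}{(x_1',x_2')}\bigr),
\]
I will exhibit a \emph{canonical} map $F_r$ (depending only on $r$ and the abstract semantics of the color labels, not on the specific graphs) such that $\alg 2r{G\times G'}\barx=F_r(\Phi_r(\barx))$.

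The base case $r=0$ is immediate: the atomic color of $\barx$ in $G\times G'$ records equality and adjacency, and both decompose as conjunctions of the corresponding conditions in the two factors ($(x_1,x_1')=(x_2,x_2')$ iff $x_1=x_2$ and $x_1'=x_2'$, and similarly for the edge relation). For the inductive step, substitute the IH into the refinement rule: a vertex of $G\times G'$ is a pair $u=(w,w')$, so the signature multiset at $\barx$ consists of pairs $(F_r(\alpha,\alpha'),F_r(\beta,\beta'))$, and for each fixed tuple $(\alpha,\alpha',\beta,\beta')$ of abstract colors its multiplicity factorizes as
\[
\bigl|\{w\colon\alg 2rG{(w,x_2)}=\alpha,\,\alg 2rG{(x_1,w)}=\beta\}\bigr|\cdot\bigl|\{w'\colon\alg 2r{G'}{(w',x_2')}=\alpha',\,\alg 2r{G'}{(x_1',w')}=\beta'\}\bigr|.
\]
Each factor is exactly a multiplicity appearing inside the signature multiset that 2-WL appends when forming the label $\alg 2{r+1}G{(x_1,x_2)}$ (respectively, $\alg 2{r+1}{G'}{(x_1',x_2')}$), and can therefore be read off canonically from the new label itself. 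Combining, $\alg 2{r+1}{G\times G'}\barx$ becomes a canonical function of $\Phi_{r+1}(\barx)$, closing the induction.

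Passing to stabilization yields $\algstabbb 2(G\times G',\barx)=F_\infty(\algstabbb 2(G,(x_1,x_2)),\algstabbb 2(G',(x_1',x_2')))$. Since $G\eqkkwl 2 H$ and $G'\eqkkwl 2 H'$, after identifying color palettes (legitimate because for $k\ge 2$ two $k$-WL color palettes either coincide or are disjoint), the multiset of pairs $(\algstabbb 2(G,\cdot),\algstabbb 2(G',\cdot))$ over $V(G)^2\times V(G')^2$ equals the analogous multiset on $V(H)^2\times V(H')^2$, since it is the product of two pairs of coinciding marginals. Applying the canonical $F_\infty$ to both products then gives the equality of multisets of stable 2-WL colors on $G\times G'$ and $H\times H'$, which is exactly $G\times G'\eqkkwl 2 H\times H'$.

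The main obstacle is verifying that $F_r$ is genuinely canonical, depending only on label semantics and not on $G,G'$; this is essential, because otherwise the equality of the factor marginals would not transfer to the tensor products. Canonicity holds because the 2-WL refinement produces each new label as a formal pair consisting of the old label together with a signature multiset, and the factor counts appearing in the inductive step are literally entries (multiplicities) of that multiset — hence readable from the label alone.
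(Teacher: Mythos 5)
Your proof is correct, but it takes a genuinely different route from the paper. The paper invokes Hella's characterization of $\eqkkwl2$ via the 3-pebble bijection game and simply combines Duplicator's winning strategies on the two factor pairs into one on the products, using the product bijection $\bar f(u,u')=(f(u),f'(u'))$; this is short once the game characterization is taken as given. You instead argue directly on the refinement process, proving by induction on rounds that $\alg 2r{G\times G'}{\barx}$ is a canonical function $F_r$ of the pair of factor colors $\Phi_r(\barx)$ — the key points being that the atomic type in the tensor product is a function of the pair of atomic types (since both equality and adjacency decompose coordinatewise), and that the multiplicity of each quadruple $(\alpha,\alpha',\beta,\beta')$ in the signature multiset factorizes into a product of two counts, each of which is literally a multiplicity recorded in the corresponding factor's round-$(r+1)$ label. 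Your handling of the non-injectivity of $F_r$ (computing multiplicities at the level of tuples before pushing forward) and of canonicity (the labels are formal nested objects, so $F_r$ depends only on their syntax) is sound, and the final product-of-marginals argument correctly yields equality of the multisets of stable colors, which is the paper's definition of $\eqkkwl2$. What each approach buys: the game argument is more economical and conceptually transparent but rests on an external equivalence; your argument is self-contained relative to the paper's definition of \WL k and in fact establishes the stronger pointwise statement that the stable product color is determined by the pair of stable factor colors. Both arguments generalize verbatim to \WL k for arbitrary~$k$.
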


\begin{proof}
We use the fact \cite{Hella96} that $G \eqkkwl 2 H$ if and only if
Duplicator has a winning strategy in the 3-pebble \emph{Hella's bijection game} on $G$ and $H$. 
The game is played by two players, Spoiler and Duplicator. 
There are three pairwise distinct pebbles $p_1,p_2,p_3$,
each given in duplicate. In one round of the game, Spoiler
puts one of the pebbles $p_i$ on a vertex in $G$ and its copy on a
vertex in $H$. When $p_i$ is on the board, $x_i$ denotes the vertex
pebbled by $p_i$ in $G$, and $y_i$ denotes the vertex pebbled by the
copy of $p_i$ in $H$. Specifically, a round is played as follows:
\begin{itemize}
\item 
Spoiler chooses $i\in\{1,2,3\}$;
\item 
Duplicator responds with a bijection $f\function{V(G)}{V(H)}$ obeying the condition that
$f(x_j)=y_j$ for all $j\ne i$ such that $p_j$ is on the board;
\item Spoiler chooses a vertex $x$ in $G$ and puts $p_i$ on $x$ and
  its copy on $f(x)$ (this move reassigns $x_i$ to vertex $x$ and
  $y_i$ to vertex $f(x)$).
\end{itemize}

Duplicator wins if she manages to keep the map $x_i\mapsto y_i$ a partial
isomorphism during the play; otherwise the winner is Spoiler. 

The \WL2-equivalence of $G$ and $H$ and of $G'$ and $H'$ implies that
Duplicator has a winning strategy in the bijection game on $G$ and $H$ and on $G'$ and $H'$.
She can combine these strategies to win also the game on $G\times G'$ and $H\times H'$
by regarding it as simultaneous play of a game on $G$ and $H$ and a game on $G'$ and $H'$.
Whenever $\bar x_i=(x_i,x'_i)$ and $\bar y_i=(y_i,y'_i)$ are pebbled in $G\times G'$ and $H\times H'$,
Duplicator assumes that $x_i$ and $y_i$ are pebbled in the game on $G$ and $H$ while
$x'_i$ and $y'_i$ are pebbled in the game on $G'$ and $H'$. Given two bijections
$f$ and $f'$ for these games, she provides Spoiler with the bijection $\bar f$
defined by $\bar f(u,u')=(f(u),f'(u'))$. Since $f$ and $f'$ ensure partial isomorphisms
between $G$ and $H$ and between $G'$ and $H'$, the bijection $\bar f$ maintains
a partial isomorphism between $G\times G'$ and $H\times H'$.
The existence of a winning strategy implies that the product graphs are \WL2-equivalent.
\end{proof}

We say that a graph $G$ is \emph{$K_3$-decomposable} if there is an edge-disjoint triangle
packing covering all edges of $G$, that is, $\rho^{K_3}(G)=e(G)/3$.

\begin{lemma}\label{lem:decomp}
If both $G$ and $H$ are $K_3$-decomposable, then $G\times H$ is $K_3$-decomposable too.
\end{lemma}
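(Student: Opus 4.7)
The strategy is to reduce the statement to the single-cell case $K_3\times K_3$. Let $G=\bigcup_{i=1}^p T_i$ and $H=\bigcup_{j=1}^q S_j$ be triangle decompositions of $G$ and $H$, so that $\Set{E(T_i)}{i\le p}$ partitions $E(G)$ and $\Set{E(S_j)}{j\le q}$ partitions $E(H)$. For each pair $(i,j)$ let $G_{i,j}$ denote the subgraph of $G\times H$ induced on $V(T_i)\times V(S_j)$, restricted to those edges $(u,u')(v,v')$ with $uv\in E(T_i)$ and $u'v'\in E(S_j)$. Since an edge of $G\times H$ comes from a unique pair consisting of an edge of $G$ and an edge of $H$, and each such edge lies in a unique $T_i$ and a unique $S_j$, the edge sets $E(G_{i,j})$ partition $E(G\times H)$. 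Moreover, each $G_{i,j}$ is isomorphic to $K_3\times K_3$.

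It therefore suffices to prove that $K_3\times K_3$ is $K_3$-decomposable. I would do this by an explicit construction. Label the vertices of each factor by $\Zset_3=\{0,1,2\}$, so the edges of $K_3\times K_3$ are exactly those pairs $(i,j)(i',j')$ with $i\ne i'$ and $j\ne j'$; there are $3\cdot3\cdot2=18$ such edges. For each permutation $\sigma$ of $\Zset_3$, form the triangle
\[
\Delta_\sigma=\Set{(0,\sigma(0)),\,(1,\sigma(1)),\,(2,\sigma(2))}.
\]
This is indeed a triangle in $K_3\times K_3$ because any two of its vertices differ in both coordinates. An edge $(i,j)(i',j')$ belongs to $\Delta_\sigma$ precisely when $\sigma(i)=j$, $\sigma(i')=j'$, and $\sigma(i'')$ equals the remaining element of $\Zset_3$, so the permutation $\sigma$ is uniquely determined by the edge. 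Hence the six triangles $\Set{\Delta_\sigma}{\sigma\in\mathrm{Sym}(\Zset_3)}$ contribute $6\cdot3=18$ pairwise distinct edges and cover all of $E(K_3\times K_3)$; they form a triangle decomposition.

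Combining the two ingredients, decompose each $G_{i,j}\cong K_3\times K_3$ into its six triangles as above, and take the disjoint union over $(i,j)$. Since the $E(G_{i,j})$ partition $E(G\times H)$, this yields an edge-disjoint triangle packing covering every edge of $G\times H$, witnessing that $G\times H$ is $K_3$-decomposable. The only nonroutine step is the explicit decomposition of $K_3\times K_3$; everything else is bookkeeping that follows from the definition of the tensor product and the fact that the triangle decompositions of $G$ and $H$ are edge partitions.
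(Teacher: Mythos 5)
Your proof is correct and follows essentially the same route as the paper: partition the edges of $G\times H$ into blocks $T_i\times S_j\cong K_3\times K_3$ coming from the two triangle decompositions, then decompose each block. The only cosmetic difference is that you verify the $K_3\times K_3$ base case by the explicit family of triangles $\Delta_\sigma$ indexed by permutations, whereas the paper notes the equivalent fact that every edge of $K_3\times K_3$ extends to a unique triangle.
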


\begin{proof}
Note first that the claim is true for $G=H=K_3$ because every edge in $K_3\times K_3$
has a unique extension to a triangle.

Let $T_G$ be a complete edge-disjoint triangle packing in $G$, and
$T_H$ be a complete edge-disjoint triangle packing in $H$. The set of all possible products $t\times t'$
over all triangles $t\in T_G$ and $t'\in T_H$ is a complete edge-disjoint $K_3\times K_3$-packing
in $G\times H$. The lemma follows as each $K_3\times K_3$ is $K_3$-decomposable.  
\end{proof}

\begin{proof}[Proof of Theorem \ref{thm:discrep}]
  Consider the Shrikhande and $4\times4$ rook's graphs, $S$ and $R$,
  as in the proof of Theorem \ref{thm:ratio}.  Since $S\eqkkwl2 R$, by
  Lemma \ref{lem:tensor-prod} we also have $S^k\eqkkwl2 R^k$, where
  the $k^{th}$ power is with respect to the tensor product.  To obtain
  the bound for $\mathit{ID}^{K_3}(n)$, it suffices, therefore, to
  prove that
\begin{equation}
  \label{eq:SkRk}
\rho(S^k)-\rho(R^k)=\Omega(v(R^k)^{1.25})=\Omega(2^{5k}).
\end{equation}

Recall that $S$ is $K_3$-decomposable. By Lemma \ref{lem:decomp}, $S^k$ is also decomposable
and, hence, $\rho(S^k)\ge\rho(R^k)$. Let $\partial(G)$ denote the number of edges that remain not covered
by an optimal triangle packing in $G$. Note that $\rho(S^k)-\rho(R^k)=\partial(R^k)$,
and we will estimate the last value.

Obviously, 
$$
\partial(G) \ge v_{\mathrm{odd}}(G)/2,
$$ 
where $v_{\mathrm{odd}}(G)$ denotes the number of vertices of odd degree in $G$.
Denote $K=(K_4)^k$. Since $v_{\mathrm{odd}}(K)=v(K)=4^k$, we conclude that $\partial(K)\ge 2^{2k-1}$.

Note that $R=K_4\square K_4$, where $\square$ denotes the Cartesian product of graphs.
It readily follows that $R$ is completely decomposable into 8 copies of $K_4$ and, therefore,
$R^k$ is completely decomposable into $8^k$ copies of $K$.
Every triangle $t$ in $R^k$ is included in one of these $K$-subgraphs.
Indeed, let $t_1,\ldots,t_k$ be the projections of $t$ onto the $k$ coordinates.
Each triangle $t_i$ is (uniquely) extendable to a $K_4=:K(i)$ in $R$.
Therefore, $t$ belongs to $K(1)\times\cdots\times K(k)$.

It follows that 
$$
\partial(R^k) \ge 8^k \partial(K)\ge2^{3k}2^{2k-1}=2^{5k}/2,
$$
yielding the desired bound~\refeq{SkRk}.
\end{proof}

\subsection{Domination number}

We conclude this section with a discussion of the domination number.
Recall that, by Estimate \refeq{gamma-int-gap}, the integrality gap of $\gamma_f$
over $n$-vertex graphs is bounded by $1+\ln n$, which is also an upper bound for
the invariance ratio of~$\gamma$. We now prove a lower bound that is tight up
to a constant factor.

A \emph{circulant graph} of $n$ vertices is a Cayley graph $\cay(\bZ_n,C)$
of the cyclic group $\bZ_n$. Here, the \emph{connection set} $C$ is a subset
of $\bZ_n$ such that $0\notin C$ and $C=-C$. Two vertices $x,y\in\bZ_n$
are adjacent in $\cay(\bZ_n,C)$ if $x-y\in C$.
Let $q$ be a prime power such that $q\equiv1\pmod4$.
The \emph{Paley graph} on $q$ vertices is the circulant graph
$\cay(\bZ_q,C_q)$ where $C_q$ is the multiplicative subgroup of $\bZ_q^*$
formed by all quadratic residues modulo~$q$.

Part 1 of Theorem \ref{thm:gamma} below implies that the integrality gap of the fractional
domination number for Paley graphs is logarithmic.
This was shown in \cite{ChappellGH17} for random graphs.
Problem 3.4 in \cite{ChappellGH17} aks whether a logarithmic gap
can be shown by an explicit construction. Our approach answers this
in the affirmative as Paley graphs are explicitly constructed.
It is actually not a big surprise that, using Paley graphs, one can replace 
a probabilistic argument in the situation like this. Indeed, it is
well known that, in some precise sense, the Paley graphs have the same first-order properties
as random graphs. Note in this respect that the property $\gamma(G)>k$ of a graph $G$
is clearly expressible in first-order logic. The proof also uses
the \WL1-invariance of the fractional domination number.

\begin{theorem}\label{thm:gamma}\hfill
  \begin{enumerate}[\bf 1.]
  \item 
Let $n$ be a prime power such that $n\equiv1\pmod4$, and $G_n$ denote the Paley graph on $n$ vertices.
Then $\gamma(G_n)\ge(\frac12-o(1))\log_2 n$ while $\gamma_f(G_n)\le2$.
  \item 
  For infinitely many $n$, there are \WL1-equivalent $n$-vertex
  graphs $G_n$ and $H_n$ such that $\gamma(G_n)/\gamma(H_n)\ge(\frac14-o(1))\log_2 n$.
  \end{enumerate}
\end{theorem}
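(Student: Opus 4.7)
The plan is to prove Part 1 via the well-known pseudorandomness of Paley graphs and then to deduce Part 2 by pairing each Paley graph with a simple ``short-interval'' circulant of the same degree, which is automatically \WL1-equivalent to it.

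For Part 1 the fractional bound is immediate. Since $G_n$ is $\tfrac{n-1}{2}$-regular and vertex-transitive, every closed neighborhood has size $\tfrac{n+1}{2}$; the uniform assignment $y_v=\tfrac{2}{n+1}$ is feasible for the fractional Dominating Set LP and witnesses $\gamma_f(G_n)\le\tfrac{2n}{n+1}<2$. For the integer lower bound I would invoke the $k$-existentially closed property of Paley graphs obtained from Weil's estimate: for any distinct $u_1,\ldots,u_k\in\mathbb{F}_n$ and the quadratic character $\chi$,
\[
\sum_{z\in\mathbb{F}_n}\prod_{i=1}^{k}\frac{1-\chi(z-u_i)}{2}=\frac{n}{2^k}+O\!\left(k\sqrt{n}\right),
\]
where the error comes from applying Weil's bound $\bigl|\sum_{z}\chi\bigl(\prod_{i\in S}(z-u_i)\bigr)\bigr|\le(|S|-1)\sqrt{n}$ to each nonempty $S\subseteq[k]$ after expanding the product. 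For $k\le(\tfrac12-\epsilon)\log_2 n$ the main term $n/2^k$ strictly exceeds $k$, so for any candidate dominating set $D$ of that size there is a vertex $z\notin D$ that is non-adjacent to every element of $D$, making $D$ non-dominating. Hence $\gamma(G_n)\ge(\tfrac12-o(1))\log_2 n$.

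For Part 2, fix a prime $n\equiv 1\pmod 4$ (infinitely many exist by Dirichlet), write $n=4m+1$, and let $G_n$ be the Paley graph on $n$ vertices. Define the companion
\[
H_n=\cay(\bZ_n,\{\pm 1,\pm 2,\ldots,\pm m\}),
\]
which is $(n-1)/2$-regular. Since both $G_n$ and $H_n$ are regular of the same degree on the same number of vertices, color refinement stabilizes immediately as a single color class on each graph, so $G_n\eqkkwl{1} H_n$. Each closed neighborhood in $H_n$ is an interval of $2m+1<n$ consecutive residues, so a single vertex does not dominate; on the other hand, the two vertices $0$ and $2m+1$ cover $\{-m,\ldots,m\}\cup\{m+1,\ldots,3m+1\}=\bZ_n$, giving $\gamma(H_n)=2$. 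Dividing the Part 1 bound by 2 yields $\gamma(G_n)/\gamma(H_n)\ge(\tfrac14-o(1))\log_2 n$.

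The main obstacle is the character-sum estimate underlying Part 1; once the existentially-closed property is pinned down with the correct constant in the exponent of $k$, the rest of the argument (separating ``$z$ non-adjacent to every element of $D$'' from ``$z\in D$'' via the obvious size comparison, and the explicit circulant construction of Part 2) is routine. The \WL1-equivalence needs no separate argument because regularity alone collapses color refinement to a single class.
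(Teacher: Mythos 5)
Your proof is correct, but it takes a genuinely different route from the paper in both halves of Part 1. For the bound $\gamma_f(G_n)\le 2$ the paper does not use a direct LP solution at all: it constructs the circulant $H_n=\cay(\bZ_n,\{\pm1,\ldots,\pm s\})$ with $n=4s+1$, notes $\gamma(H_n)=2$, observes that $G_n\eqkkwl1 H_n$ by equal regularity, and then transfers the bound via Rubalcaba's \WL1-invariance of $\gamma_f$, i.e.\ $\gamma_f(G_n)=\gamma_f(H_n)\le\gamma(H_n)=2$; your uniform assignment $y_v=\tfrac{2}{n+1}$ is more elementary, self-contained, and even gives the marginally stronger bound $2n/(n+1)$, but it bypasses the \WL1-invariance machinery the paper is deliberately showcasing. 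For the integer lower bound the paper simply cites the known $k$-extension property of Paley graphs (Blass--Exoo--Harary, Bollob\'as--Thomason), valid whenever $n>k^2 2^{2k-1}$, and plugs in $k(n)=\lceil\tfrac12\log_2 n-\log_2\log_2 n\rceil$, whereas you rederive the needed special case (a vertex outside $D$ non-adjacent to all of $D$) from Weil's bound; that derivation is standard and sound, though one sentence is misstated: what must exceed what is not ``$n/2^k$ strictly exceeds $k$'' but rather $n/2^k$ must dominate the Weil error $O(k\sqrt n)$ together with the at most $k$ boundary terms coming from $z\in D$ (where the factor is $\tfrac12$ because $\chi(0)=0$) --- which does hold in your range $k\le(\tfrac12-\epsilon)\log_2 n$ since then $n/2^k\ge n^{1/2+\epsilon}\gg \sqrt n\log n$. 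Part 2 coincides with the paper's argument: the same interval circulant $H_n$, the two-element dominating set, and \WL1-equivalence from equal regularity; your restriction to primes via Dirichlet is harmless since only infinitely many $n$ are required.
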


  \begin{proof}
The \emph{$k$-extension property}  
says that, for every two disjoint sets of vertices $X$ and $Y$ with $|X\cup Y| \le k$
(where one of $X$ and $Y$ can be empty), 
there is a vertex $z\notin X\cup Y$ adjacent to each $x\in X$ and to no $y\in Y$.
It is known \cite{BlassEH81,BollobasT81} that the Paley graph $G_n$ has the $k$-extension 
property for each $k$ such that $n>k^22^{2k-1}$. The last condition is true, in particular,
for $k=k(n)$ where 
$$k(n)=\lceil\frac12\log_2n-\log_2\log_2n\rceil.$$
Note that the $k$-extension property implies that the domination number exceeds $k$.
It follows that 
$$\gamma(G_n)>k(n)\ge\frac12\log_2n-\log_2\log_2n.$$

Now, let $n=4s+1$ and consider the circulant graph $H_n=\cay(\bZ_n,\{\pm1,\ldots,\pm s\})$.
The set $\{0,2s\}$ is dominating in $H_n$, which implies $\gamma(H_n)=2$.
Since $G_n$ and $H_n$ are regular graphs of the same degree, $G_n\eqkkwl1 H_n$.
By Rubalcaba's result on the \WL1-invariance of the fractional domination number
(see Section \ref{ss:gamma_f}), we therefore have
$$\gamma_f(G_n)=\gamma_f(H_n)\le\gamma(H_n)=2,$$
which implies Part~1.

The pair of graphs $G_n$ and $H_n$ yields also a proof of Part~2.
  \end{proof}

\section{Conclusion}
We have studied Weisfeiler-Leman invariance of the fractional packing
number $\pi^F_f$ and its edge-disjoint variant $\rho^F_f$. As a
starting point of our analysis, we have shown that the fractional
matching number of a hypergraph is \WL1-invariant, where a hypergraph
(an instance of the Set Packing problem) is represented by its
incidence graph. For a pattern graph $F$ with $\ell$ vertices, this
already implies the \WL{(2\ell-1)}-invariance of $\pi^F_f$
(see Corollary \ref{cor:interpr}).  Our main result, Theorem
\ref{thm:htwpack}, is more precise. It shows for pattern graphs $F$ of
hereditary treewidth $k$ that $\pi^F_f$ is {\WL k}-invariant. This is
optimal in some cases, for example, when $F=K_3$ or $F=K_{1,s}$.  The
latter case, i.e., when $F$ is a star, includes the fractional
matching number and adds to the list of fractional graph parameters
invariant under fractional isomorphisms.

An important motivation for the study of $\mathrm{WL}$-invariance is
that this concept can be applied to showing lower bounds on the integrality gap of a
fractional graph parameter~$\pi_f$:
\begin{itemize}
\item 
we first prove that $\pi_f$ is {\WL k}-invariant for an integer~$k$;
\item 
then, we estimate the {\WL k}-invariance ratio of $\pi$ from below,
which provides us with a lower bound for the integrality gap
of~$\pi_f$.
\end{itemize}

Interestingly, this approach yields tight bounds in some cases like
for the fractional matching number $\nu_f$, the fractional cover
number $\tau_f$, the fractional domination number $\gamma_f$, and the
fractional edge-disjoint triangle packing number $\rho^{K_3}_f$.
While the first two examples are based on simple graph sequences\footnote{For $\nu_f$, 
it is enough to consider $G=C_{6s}$ and $H=2s\,C_3$, as
  discussed in Section \ref{s:intro}. For $\tau_f$, consider the pair
  of $\WL1$-equivalent graphs $G=2K_s\cup sK_2$ and $H=K_{s,s}$,
  where $G$ is the $2s$-vertex graph obtained from two vertex-disjoint
  $s$-cliques by adding a perfect matching between them, and note that
  $\tau(G)=2s-2$ while $\tau(H)=s$.}, the last two cases are
considered in Theorems \ref{thm:gamma} and \ref{thm:ratio}
respectively. It is worth mentioning that, for all these parameters,
the lower bounds for the invariance ratio are shown using explicit graphs
or graph sequences, which in the case of the fractional domination number
gives a solution of \cite[Problem~3.4]{ChappellGH17}.\footnote{%
We have just learned that a different solution is given in~\cite{Chappell}.}
In general, the connection between the integrality gap and invariance ratio is 
an intriguing question that merits further study. Are
these two characteristics close to each other for all WL invariant graph parameters?

Another important question, whose discussion is initiated in Section \ref{ss:inv-diff}, is whether the
\WL2-invariance of $\rho^{K_3}_f$ can be used for obtaining tight
bounds on the additive integrality gap of this parameter.

\subsection*{Acknowledgement.}
We thank Glenn Chappell for sending us his manuscript~\cite{Chappell}.

\end{document}